\documentclass[acmtog,authorversion]{acmart}
\usepackage[shortlabels]{enumitem}
\usepackage{color}
\AtBeginDocument{%
  }

\setcopyright{acmlicensed}
\copyrightyear{2026}
\acmYear{2026}
\acmVolume{45} 
\acmNumber{4} 
\acmArticle{49}
\acmConference[SIGGRAPH '26 Conference Papers]
  {Special Interest Group on Computer Graphics and Interactive Techniques Conference Conference Papers}
  {July 19--23, 2026}{Los Angeles, CA, USA}
\acmJournal{TOG}
\definecolor{AAA}{rgb}{1.0, 0.13, 0.32}
\definecolor{BBB}{rgb}{0.2, 0.1, 1}
\definecolor{CCC}{rgb}{0.0, 1, 0}
\definecolor{DDD}{rgb}{0.0, 0, 1}
\newcommand{\ul}[1]{\underline{#1}}

\usepackage{xcolor}
\usepackage{colortbl}
\usepackage{graphicx} % for resizebox

\definecolor{headerblue}{rgb}{0.8,0.87,0.94}
\definecolor{rowgray}{rgb}{0.95,0.95,0.95}
\definecolor{myorange}{HTML}{ff9a33}

\definecolor{onec}{HTML}{fe218b}
\definecolor{twoc}{HTML}{fed700}
\definecolor{tric}{HTML}{21b0fe}

\usepackage[linesnumbered,ruled]{algorithm2e}
\begin{document}

%%
%% The "title" command has an optional parameter,
%% allowing the author to define a "short title" to be used in page headers.
\title{Manifold $k$-NN: Accelerated k-NN Queries for Manifold Point Clouds}

%%
%% The "author" command and its associated commands are used to define
%% the authors and their affiliations.
%% Of note is the shared affiliation of the first two authors, and the
%% "authornote" and "authornotemark" commands
%% used to denote shared contribution to the research.

\author{Pengfei Wang}
\orcid{0000-0002-2079-275X}
\affiliation{%
  \institution{Shandong University}
  \city{Qingdao}
  \country{China}}
\email{pengfei1998@foxmail.com}

\author{Qinghao Guo}
\orcid{0009-0003-7289-3557}
\affiliation{%
  \institution{Shandong University}
  \city{Qingdao}
  \country{China}}
\email{1300165109@qq.com}

\author{Haisen Zhao}
\affiliation{%
  \institution{Shandong University}
  \city{Qingdao}
  \country{China}}
\email{haisenzhao@gmail.com}

\author{Shiqing Xin}
\authornote{Corresponding authors.}
\affiliation{%
  \institution{Shandong University}
  \city{Qingdao}
  \country{China}}
\email{xinshiqing@sdu.edu.cn}

\author{Shuangmin Chen}
\affiliation{%
  \department{School of Information and Technology}
  \institution{Qingdao University of Science and Technology}
  \city{Qingdao}
  \country{China}}
\affiliation{%
  \institution{Shandong Key Laboratory of Deep Sea Equipment Intelligent Networking}
  \city{Qingdao}
  \country{China}}
\email{csmqq@163.com}

\author{Changhe Tu}
\authornotemark[1]
\affiliation{%
  \institution{Shandong University}
  \city{Qingdao}
  \country{China}}
\email{chtu@sdu.edu.cn}

\author{Wenping Wang}
\affiliation{%
  \institution{Texas A\&M University}
  \city{College Station}
  \country{United States of America}}
\email{wenping@tamu.edu}

\renewcommand{\shortauthors}{Wang et al.}

\begin{abstract}
$k$-nearest neighbor ($k$-NN) search is a fundamental primitive in geometry processing and computer graphics. While spatial partitioning structures such as $kd$-trees are standard, they are often manifold-blind, failing to exploit the intrinsic low-dimensional structure of points sampled from 2-manifolds. Recent advances in dynamic programming-based nearest neighbor search (DP-NNS) leverage incrementally constructed Voronoi diagrams to accelerate queries, where each site $p$ maintains a list of \emph{successors} that progressively refine its Voronoi cell. However, DP-NNS is restricted to single nearest neighbor ($k=1$) searches, precluding their adoption in applications that require local neighborhood statistics.

In this paper, we generalize the DP-NNS framework to support arbitrary $k$-NN queries for manifold-aligned data. Our approach is founded on the geometric observation that if $p_i$ is the nearest neighbor of a query $q$ in $P$, then the second nearest neighbor of $q$ must reside either within the prefix set $P_{1:i-1} = \{p_1, \dots, p_{i-1}\}$ or within $p_i$'s successor list. By recursively extending this principle, we introduce \textbf{Manifold $k$-NN}, a recursive algorithmic scheme that significantly outperforms conventional $kd$-trees for manifold-aligned data. Our method achieves a $1\times$--$10\times$ speedup in volume-to-surface query scenarios and inherently supports \emph{dynamic prefix queries}---enabling $k$-NN searches within any subset $P_{1:m}$ ($m \leq n$) with zero overhead.
Furthermore, we extend the framework to support point deletion via local Delaunay updates, providing a complete suite of dynamic operations for point set modification.
Comprehensive experiments on diverse geometric datasets demonstrate the efficiency and broad applicability of our approach for modern graphics pipelines.

Source code is available at \url{https://github.com/sssomeone/manifold-knn}.
\end{abstract}

\begin{CCSXML}
<ccs2012>
   <concept>
       <concept_id>10003752.10010061.10010063</concept_id>
       <concept_desc>Theory of computation~Computational geometry</concept_desc>
       <concept_significance>500</concept_significance>
       </concept>
   <concept>
       <concept_id>10010147.10010371.10010396.10010400</concept_id>
       <concept_desc>Computing methodologies~Point-based models</concept_desc>
       <concept_significance>500</concept_significance>
       </concept>
 </ccs2012>
\end{CCSXML}

\ccsdesc[500]{Theory of computation~Computational geometry}
\ccsdesc[500]{Computing methodologies~Point-based models}

%%
%% This command processes the author and affiliation and title
%% information and builds the first part of the formatted document.
\maketitle

\section{Introduction}
$k$-nearest neighbor ($k$-NN) search is a fundamental primitive in computer graphics and geometry processing. Its utility spans from classical tasks such as normal estimation and denoising~\cite{10.1145/142920.134011,10.1145/777792.777840,Rui2022RFEPS}, to modern frontiers in implicit surface reconstruction and point cloud registration~\cite{25WenRepair,1175093,121791}. These algorithms rely heavily on querying local neighborhoods to analyze and synthesize underlying geometry. Consequently, the efficiency of $k$-NN search on point-sampled manifolds is often the decisive factor in the overall performance and scalability of modern graphics pipelines.

In many computer graphics applications---such as Moving Least Squares (MLS) projection~\cite{1175093,10.5555/882370.882401} and Signed Distance Field (SDF) evaluation~\cite{10.1145/133994.134011,NeuralPull,ErlerEtAl:Points2Surf:ECCV:2020}---query points are typically distributed throughout a 3D volume, whereas the target data points are constrained to lower-dimensional manifolds. While $k$-NN search traditionally relies on spatial indexing structures like $kd$-trees~\cite{10.1145/361002.361007} or $R$-trees~\cite{10.1145/602259.602266}, their performance often becomes a bottleneck in these \emph{volume-to-surface} query scenarios. This degradation stems from the fact that volumetric partitioning schemes are agnostic to the intrinsic geometric structure of the manifold, leading to inefficient pruning and excessive node traversals (see Figure~\ref{fig:EarthData}).

\begin{figure}[t]
    \centering
    \includegraphics[width=0.99\linewidth]{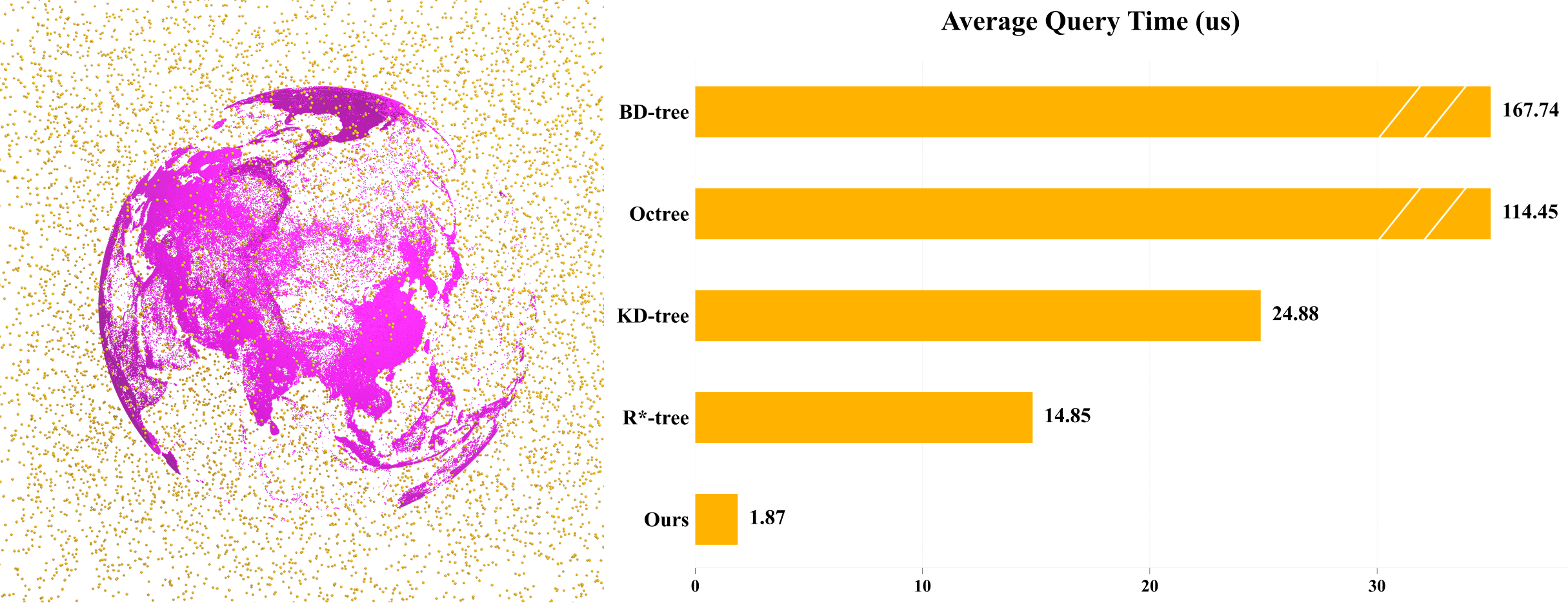}
\caption{Performance comparison in a \emph{volume-to-surface} query scenario. For a dataset of $10^6$ points sampled from a manifold surface (Earth) and $10^6$ query points distributed within its $2\times$ axis-aligned bounding box (AABB), retrieving $k=20$ neighbors exposes the inherent limitations of traditional structures. Standard volumetric indices, such as Octrees and $kd$-trees, suffer from poor pruning and excessive node traversals when constrained by manifold geometry. In contrast, our \textbf{Manifold $k$-NN} achieves an average query time of 1.87\,$\mu$s, demonstrating a substantial speedup by effectively leveraging the intrinsic surface structure.}
    \label{fig:EarthData}
\end{figure}

Recently, \citet{11165079} introduced a dynamic programming-based nearest neighbor search (DP-NNS) framework built upon incremental Voronoi diagrams. Given a point set $P = \{p_1, \dots, p_n\}$ where the indices denote the insertion order, the core mechanism involves maintaining a \emph{successor list} for each site $p_i$. A point $p_j$ ($j > i$) is classified as a successor of $p_i$ if and only if its insertion prunes the Voronoi cell of $p_i$ in the prefix set $P_{1:j}$. While DP-NNS exhibits superior runtime efficiency on manifold-aligned datasets, it is inherently restricted to single nearest neighbor ($k=1$) queries. A na\"{i}ve extension to $k$-NN search—such as performing a breadth-first traversal over the dual Delaunay triangulation starting from the 1-NN—would necessitate redundant geometric predicate evaluations, particularly as $k$ scales.

In this paper, we generalize the DP-NNS framework to support efficient and dynamic $k$-NN queries. Our key geometric observation is that 
the second nearest neighbor of query point $q$ must be either the nearest
neighbor within the prefix $P_{1:i-1} = \{p_1, \dots, p_{i-1}\}$ or one of the
points in $p_i$'s successor list.
By recursively applying this principle, we introduce \textbf{Manifold $k$-NN}, an algorithmic scheme tailored for manifold-aligned data. Our method achieves $1\times$--$10\times$ speedup over $kd$-trees for volume-to-surface queries while remaining competitive for uniform volumetric distributions.
Although Delaunay construction necessitates a more intensive preprocessing phase, the resulting gains in query efficiency represent a worthwhile trade-off in manifold-constrained environments.
Our formulation also naturally supports \emph{dynamic prefix queries}---enabling $k$-NN searches within any subset $P_{1:m}$ ($m \leq n$) without re-computation. This capability is particularly relevant for progressive level-of-detail (LOD) analysis and temporal queries in streaming geometry. Beyond query efficiency, we extend our framework to support site deletion via local Delaunay updates, providing a complete suite of operations for dynamic neighborhood maintenance.

Our main contributions are:
\begin{itemize}[leftmargin=*,nosep]
    \item We generalize the DP-NNS framework to support $k$-NN queries, achieving significant speedups on manifold-sampled geometry compared to state-of-the-art $kd$-trees.
    \item We introduce a robust site deletion operator based on local Delaunay updates, completing the suite of operations required for fully dynamic $k$-NN maintenance.
    \item We demonstrate that our method inherently supports prefix-subset queries with zero overhead, enabling instantaneous access to multiresolution or historical geometric states.
\end{itemize}

\section{Related Work}

\subsection{$k$-Nearest Neighbor Search}
$k$-nearest neighbor ($k$-NN) search is a fundamental operation in geometry processing, serving as a critical building block for tasks ranging from surface reconstruction to point cloud analysis. Traditional approaches primarily rely on hierarchical spatial indexing structures that partition either the ambient space or the data itself to accelerate proximity queries.

The \textit{$kd$-tree}~\cite{10.1145/361002.361007} remains the de facto standard for exact $k$-NN search in low-dimensional static datasets.
By recursively bisecting the point set along alternating coordinate axes, it achieves $O(k \log n)$ query efficiency on averag using priority-queue-based backtracking and geometric pruning~\cite{vermeulen2017comparative}. However, while its partitioning adapts to data density, the axis-aligned splitting strategy is often ``manifold-blind''---it fails to respect the intrinsic geometry of point-sampled surfaces, leading to inefficient pruning and excessive node traversals in volume-to-surface query scenarios. Similarly, the \textit{Octree} decomposes 3D space into regular octants, offering structural simplicity for radius searches and ray casting~\cite{10.1007/s10514-012-9321-0}. Despite adaptive subdivision, fixed axis-aligned partitioning fails to align with manifold geometry, causing significant performance degradation when query points are distant from the surface.

The \textit{$R$-tree}~\cite{10.1145/602259.602266} and its optimized variant, the \textit{$R^*$-tree}~\cite{rstarTREE}, organize data using a hierarchy of Minimum Bounding Boxes (MBBs). Unlike space-partitioning methods that maintain disjoint regions, $R$-trees allow MBBs to overlap, providing flexibility in spatial organization. However, their query efficiency is fundamentally bounded by the tightness of the MBB approximations; when representing thin manifolds embedded in volumetric space, bounding boxes exhibit significant overlap and poor geometric fit, undermining the pruning effectiveness.

Recently, \citet{11165079} proposed a dynamic programming approach (DP-NNS) based on incremental Voronoi diagrams for efficient nearest neighbor search, demonstrating superior performance across diverse point distributions. However, their method is strictly restricted to single nearest neighbor ($k=1$) queries. Generalizing this framework to $k$-NN searches---essential for computing local surface statistics---remains a non-trivial challenge that we address in this work.

\subsection{Voronoi Diagram and Delaunay Triangulation}
\label{sec:Voronoi_and_delaunay}

Given a set of generator sites $P = \{p_i\}_{i=1}^n$ in a domain $\Omega$ with metric $d$, the Voronoi diagram~\cite{Voronoi+1908+97+102} partitions $\Omega$ into cells, where each cell $V(p_i)$ is defined as the region dominated by site $p_i$: 
\begin{equation}
V(p_i) = \{ x \in \Omega \mid d(x, p_i) \leq d(x, p_j), \forall j \neq i \}. 
\end{equation}

This proximity-based decomposition naturally encodes nearest-neighbor relationships. Let $V(p_i)$ denote the Voronoi cell associated with site $p_i \in P$. A query point $q$ resides within $V(p_i)$ if and only if $p_i$ is the nearest neighbor of $q$ in $P$. This property extends to $k$ neighbors through a fundamental geometric principle~\cite{10.5555/1316689.1316762}: for any query location $q$, if $\{p_{i_1}, \dots, p_{i_k}\}$ are its $k$ nearest sites (ordered by ascending distance to $q$), then these sites form a connected subgraph in the Voronoi adjacency graph (i.e., the dual Delaunay triangulation). While this connectivity enables $k$-NN retrieval via local graph traversals starting from the 1-NN, such methods typically entail redundant and computationally expensive geometric predicate evaluations, particularly as $k$ scales or the point distribution becomes complex.

The Delaunay triangulation, the geometric dual of the Voronoi diagram, explicitly encodes these adjacency relationships through edge connectivity. Incremental construction algorithms, such as the Bowyer-Watson method~\cite{10.1093/comjnl/24.2.162}, build Delaunay triangulations by inserting points sequentially and updating affected simplices locally, achieving $O(n \log n)$ average complexity in 3D~\cite{10.1145/777792.777823}. This framework inherently supports dynamic operations: point insertion adds new vertices and updates local connectivity, while point deletion (via local retriangulation) maintains structural validity. Our work leverages this dynamic nature to provide a complete suite of operators for manifold-aware $k$-NN maintenance.

\section{Preliminaries}
\label{sec:preliminaries}

Our work generalizes the DP-NNS framework introduced by~\citet{11165079}, which utilizes a dynamic programming approach for nearest neighbor search. This framework is particularly potent for \emph{point-sampled manifolds}---points sampled from surfaces embedded in $\mathbb{R}^3$---as it effectively exploits their intrinsic low-dimensional structure during the search process.

Given a point set $P = \{p_i\}_{i=1}^n$ ordered by insertion time (birth-time) and a query point $q \in \mathbb{R}^3$, let $\Phi_{1:m}(q)$ denote the nearest neighbor to $q$ within the prefix set $P_{1:m} = \{p_i\}_{i=1}^m$, for $1 \leq m \leq n$. The retrieval of this neighbor follows a recursive sequence:
\begin{equation}
\Phi_{1:m}(q) = \begin{cases}
    p_1, & m = 1, \\
    \Phi_{1:m-1}(q), & m > 1 \text{ and } \lVert q - \Phi_{1:m-1}(q) \rVert \leq \lVert q - p_m \rVert, \\
    p_m, & m > 1 \text{ and } \lVert q - \Phi_{1:m-1}(q) \rVert > \lVert q - p_m \rVert.
\end{cases}
\end{equation}
The global nearest neighbor is thus $\Phi_{1:n}(q)$. 

Let $\mathcal{V}_{1:m}$ denote the Voronoi diagram constructed from $P_{1:m}$. By definition, $q$ resides within the Voronoi cell of its current nearest neighbor: $q \in \text{Cell}(\Phi_{1:m}(q); \mathcal{V}_{1:m})$.
A key geometric insight from~\cite{11165079} is that $\Phi_{1:m}(q)$ only deviates from $\Phi_{1:m-1}(q)$ if the insertion of $p_m$ prunes (shrinks) the Voronoi cell of the previous nearest neighbor (see top part of Figure~\ref{fig:voronoi_insertion}). 

This relationship is formalized as:
\begin{equation}
p_m \not\hookrightarrow \text{Cell}(\Phi_{1:m-1}(q); \mathcal{V}_{1:m-1}) \implies \Phi_{1:m}(q) = \Phi_{1:m-1}(q),
\end{equation}
where $\hookrightarrow$ denotes that the insertion of a site prunes the Voronoi cell of an existing site. By extension, for a sequence of subsequent insertions:
\begin{align}
    p_{m+j} \not\hookrightarrow \text{Cell}(\Phi_{1:m}(q); \mathcal{V}_{1:m+j-1}), \quad \forall 1 \leq j \leq k \nonumber \\
    \implies \Phi_{1:m+k}(q) = \Phi_{1:m+k-1}(q) = \dots = \Phi_{1:m}(q).
\end{align}

To operationalize this, a \textbf{successor list} $L_i$ is maintained for each site $p_i$. When a new site $p_m$ prunes the cells of earlier sites (ancestors) during construction, $p_m$ is appended to their respective successor lists. This structure effectively encodes the evolution of the Voronoi diagram (see Figure~\ref{fig:voronoi_insertion}).

\begin{figure}[t]
    \centering
    \includegraphics[width=.99\linewidth]{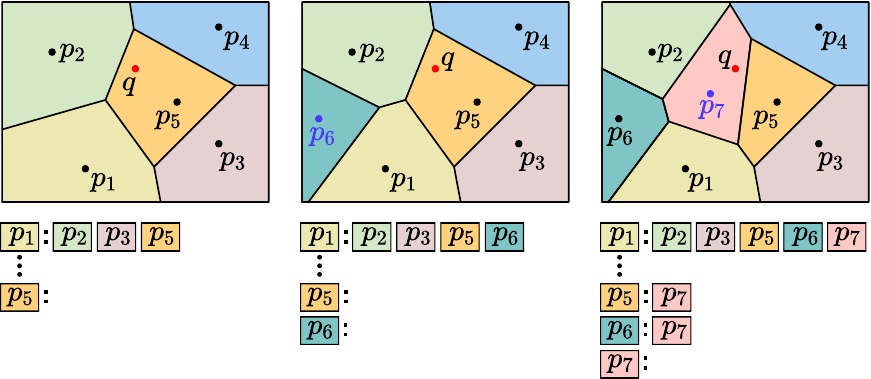}
    \caption{Core observation: a newly inserted point can only change the nearest neighbor of query $q$ if it affects the cell containing $q$. \textbf{Top Left:} Voronoi diagram $\mathcal{V}_5$ from the first 5 points. Query point $q$ (red) lies in the yellow cell with generator $p_5$ ($\Phi_5(q) = p_5$). \textbf{Top Middle:} Inserting $p_6$ creates a cyan cell without affecting the yellow cell, thus $\Phi_6(q) = \Phi_5(q)$. \textbf{Top Right:} Inserting $p_7$ creates a pink cell that shrinks the yellow cell, resulting in $\Phi_7(q) \neq \Phi_6(q)$. Based on this observation, the DP-NNS framework maintains a Query List $L_i$ for each point $p_i$. When inserting $p_j$, the algorithm identifies all Delaunay neighbors of $p_j$ and appends $p_j$ to each neighbor's Query List, encoding which subsequent points have affected each cell. \textbf{Bottom:} Query List states at these insertion stages.}
    \label{fig:voronoi_insertion}
\end{figure}

During query execution, the algorithm traverses these lists starting from $p_1$. If a site $p_j$ in the current list $L_{\text{curr}}$ satisfies $\lVert q - p_j \rVert < \lVert q - p_{\text{curr}} \rVert$, the candidate is updated to $p_j$, and the search jumps to explore $L_j$. \textbf{Essentially, the search path mirrors the sequence of sites that would have occupied $q$'s location throughout the construction of the Voronoi diagram. }

Despite its efficiency for $k=1$ queries, this framework faces two significant challenges:
\begin{enumerate}[leftmargin=*,nosep]
    \item \textbf{Generalization to $k$-NN}: The current logic is restricted to single nearest neighbors, precluding its use in geometry processing tasks that require local neighborhood statistics (e.g., normal estimation, MLS surface fitting).
    \item \textbf{Dynamic Maintenance}: A robust site deletion operator and a mechanism to maintain successor lists during point removal are missing. A complete suite of dynamic operators (insertion and deletion) is essential for modern, evolving graphics datasets.
\end{enumerate}

\begin{figure}[htbp]
	\centering
\includegraphics
[width=1.0\linewidth]{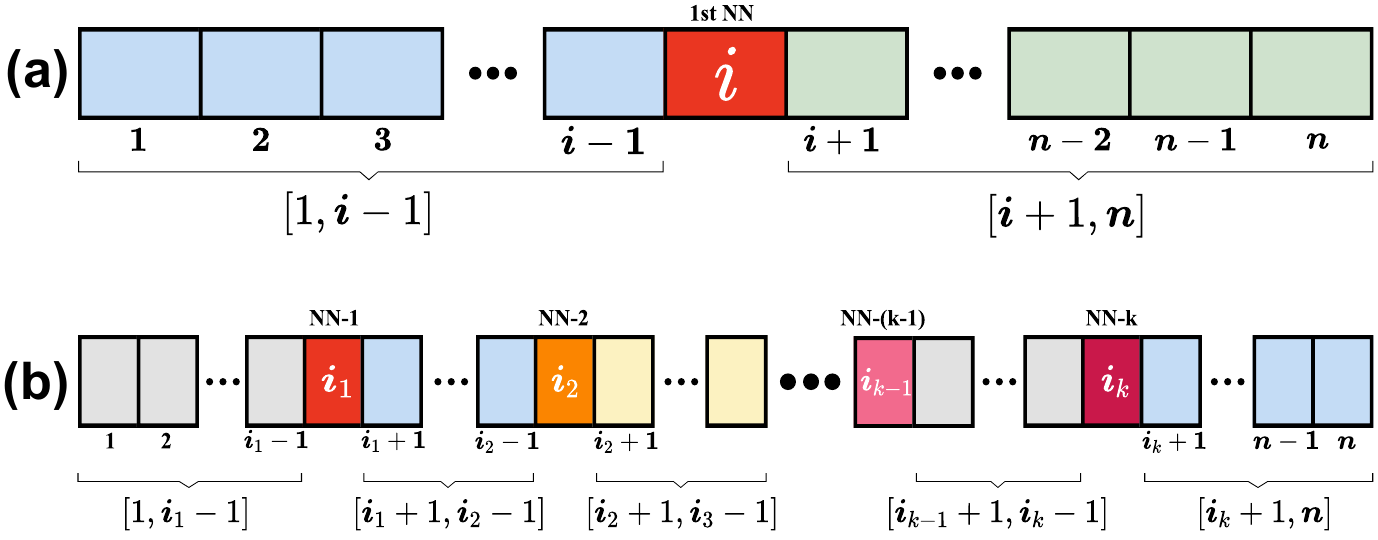}
\caption{Illustration of the recursive search space partitioning for $k$-NN queries. 
\textbf{(a)} Given that the first nearest neighbor is $p_i$, the second nearest neighbor must reside either within the prefix set $P_{1:i-1}$ or within the successor list of $p_i$ (a specific subset of $\{p_{i+1}, \dots, p_n\}$).
\textbf{(b)} For $k$ nearest neighbors identified at indices $\{i_1, i_2, \dots, i_k\}$, the insertion history $[1, n]$ is effectively partitioned into $k+1$ disjoint search intervals. The $(k+1)$-th nearest neighbor is guaranteed to reside in one of these intervals.}

\label{fig:k_nearest}
\end{figure}

\section{Method}
\subsection{State Transition for $k$-Nearest Neighbor Search}
The fundamental insight of our approach is that $k$-NN search on birth-time-ordered point sets can be formulated as a recursive partitioning of the insertion history. 

Formally, let $\Phi_{s:e}(q)$ denote the nearest neighbor to a query point $q$ within the index range $[s, e]$. The global nearest neighbor is thus $p_i = \Phi_{1:n}(q)$. 
As illustrated in Figure~\ref{fig:k_nearest}(a), by definition, the second-nearest neighbor must reside either in the prefix interval $[1, i-1]$ or the suffix interval $[i+1, n]$ (with the convention that $[a, b] = \varnothing$ if $a > b$). We analyze them separately:

\textbf{(1)} If the second-nearest neighbor resides within the interval $[1, i-1]$, it is identical to the nearest neighbor of the prefix set, denoted as $p_j = \Phi_{1:i-1}(q)$. This implies that $p_j$ was the closest site to $q$ at the moment immediately preceding the insertion of $p_i$. Specifically, $p_j$ can be efficiently retrieved by executing a standard 1-NN query (as described in Section~\ref{sec:preliminaries}) restricted to the prefix $P_{1:i-1}$. Note that for this site, $p_j = \Phi_{1:j}(q)$ inherently holds.

\textbf{(2)} If the second-nearest neighbor resides within the suffix $[i+1, n]$, we leverage the Voronoi adjacency properties discussed in Section~\ref{sec:Voronoi_and_delaunay}. Let $p_j$ be the second-nearest neighbor where $j > i$. Since $p_i$ and $p_j$ are the first and second nearest neighbors in the prefix $P_{1:j}$, they must be adjacent in the corresponding Voronoi diagram $\mathcal{V}_{1:j}$. 
Consequently, the insertion of $p_j$ must prune the Voronoi cell of $p_i$.
By the definition of successor lists in Section~\ref{sec:preliminaries}, this guarantees that $p_j$ is a member of $p_i$'s successor list $L_i$.

Consequently, once the first nearest neighbor is identified, the search for the second nearest neighbor proceeds recursively within restricted subsets of the insertion history, requiring no precomputation beyond the initial successor lists. This recursive structure allows us to transform the $k$-NN problem into a sequence of coordinated 1-NN queries over dynamically partitioned index intervals (see Figure~\ref{fig:k_nearest}(b)), as formalized in the following theorem.

\begin{theorem}
\label{thm:knn}
Given a birth-time-ordered point set $P = \{p_i\}_{i=1}^n$, let $\mathcal{N}_k(q) = \{p_{i_1}, p_{i_2}, \dots, p_{i_k}\}$ denote the set of $k$ nearest neighbors of a query $q$ in $P$, 
where indices are sorted by insertion time (birth-time order) such that $i_1 < i_2 < \dots < i_k$.
Then the $(k+1)$-th nearest neighbor of $q$ must either reside within the prefix set $P_{1:i_1-1}$ or belong to the successor list $L_{i_j}$ for some $1 \leq j \leq k$.
\end{theorem}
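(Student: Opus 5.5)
Let $p_m$ denote the $(k+1)$-th nearest neighbor of $q$. We assume generic position (distinct distances to $q$, no degenerate Voronoi configurations) and split according to where the index $m$ falls relative to $i_1<\dots<i_k$. \textbf{Case 1:} $m<i_1$. Then $p_m\in P_{1:i_1-1}$ and we are done. \textbf{Case 2:} $m>i_1$. Here we will show that $p_m$ belongs to $L_{i_j}$ for some $j$ with $i_j<m$. The argument works in the prefix $P_{1:m}$ and uses the connectivity principle recalled in Section~\ref{sec:Voronoi_and_delaunay}. That principle says the $\ell$ nearest sites of $q$ induce a connected subgraph of the Delaunay graph.

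First, restrict to the prefix $P_{1:m}$. Let $S=\{p_{i_j}: i_j<m\}$, which is nonempty because $i_1<m$. Every site of $P_{1:m}$ other than those in $S\cup\{p_m\}$ lies outside $\mathcal{N}_{k+1}(q)$. Such a site is therefore farther from $q$ than $p_m$ and than every element of $S$. Hence $S\cup\{p_m\}$ is exactly the set of $|S|+1$ nearest neighbors of $q$ within $P_{1:m}$, and $p_m$ is the farthest among them.

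Next, apply the connectivity principle in $\mathcal{V}_{1:m}$. The set $S\cup\{p_m\}$ induces a connected subgraph of the Delaunay triangulation of $P_{1:m}$. Since $|S|\ge 1$, the vertex $p_m$ has at least one Delaunay neighbor $p_{i_j}\in S$. Delaunay adjacency of $p_{i_j}$ and $p_m$ in $\mathcal{V}_{1:m}$ means their cells share a facet. That facet is the part of the new cell $\mathrm{Cell}(p_m;\mathcal{V}_{1:m})$ carved out of $\mathrm{Cell}(p_{i_j};\mathcal{V}_{1:m-1})$. So inserting $p_m$ prunes the cell of $p_{i_j}$, i.e.\ $p_m\hookrightarrow \mathrm{Cell}(p_{i_j};\mathcal{V}_{1:m-1})$. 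By the definition of successor lists in Section~\ref{sec:preliminaries}, $p_m$ was appended to $L_{i_j}$ when it was inserted, which proves the claim.

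\textbf{Main obstacle.} The delicate step is justifying the connectivity principle for the prefix set. I would prove it directly. Consider the order-$\ell$ neighborhood $T=S\cup\{p_m\}$ of $q$ in $P_{1:m}$. Take any $u\in T$ and walk along the segment from $q$ to $u$. Along this segment the distance to $u$ decreases strictly faster (relative to other sites) than the distance to any site outside $T$. So at every point of the segment, the nearest site is in $T$. The segment therefore passes only through Voronoi cells of sites in $T$. Consecutive cells along it share facets in general position, which gives a Delaunay path inside $T$ from the nearest neighbor of $q$ to $u$.

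This argument in fact gives something stronger. Apply it with $u=p_m$ and note that $p_m$ is farthest from $q$ in $T$. The last cell crossed before entering $\mathrm{Cell}(p_m)$ then belongs to some $p_{i_j}\in S$, which is exactly the adjacency needed in the main step. The only remaining care is degenerate ties, which are handled by the general-position assumption or by the same tie-breaking the construction uses.
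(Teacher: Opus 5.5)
Your proposal is correct and follows essentially the paper's route. You use the same case split on whether the index of the $(k+1)$-th neighbor precedes $i_1$, and otherwise the same observation: in the prefix $P_{1:m}$, the sites $S\cup\{p_m\}$ are the nearest neighbors of $q$ with $p_m$ farthest, so Delaunay connectivity makes $p_m$ adjacent at insertion time to some earlier $p_{i_j}$, which places it in $L_{i_j}$. Your segment-walk proof of connectivity rests on $\|x-s\|\ge\|q-s\|-\|q-x\|>\|q-u\|-\|q-x\|=\|x-u\|$ for $x$ on the segment and $s\notin T$, and together with your explicit check that adjacency in $\mathcal{V}_{1:m}$ means pruning, it fills in steps the paper only cites.
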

\begin{proof}
The indices of the $k$ nearest neighbors $\{i_1, i_2, \dots, i_k\}$ partition the insertion history $\{1, 2, \dots, n\}$ into $k+1$ disjoint intervals (some of which may be empty), as illustrated in Figure~\ref{fig:k_nearest}(b): 
\begin{equation}
[1, i_1-1], \quad [i_1+1, i_2-1], \quad \dots, \quad [i_{k-1}+1, i_k-1], \quad [i_k+1, n]. 
\end{equation} 
Let $\ell$ denote the index of the $(k+1)$-th nearest neighbor of $q$ in $P$. By definition, $\ell$ must reside within exactly one of these intervals. We analyze the possibilities via case analysis:

\textbf{Case 1.} If $i_1 > 1$ and $\ell \in [1, i_1-1]$, then by the definition of nearest neighbors, $p_\ell$ must be the nearest neighbor to $q$ within the prefix set $P_{1:i_1-1}$. Formally, $p_\ell = \Phi_{1:i_1-1}(q)$. Furthermore, since $\ell < i_1$ and $i_1$ is the smallest index among the $k$ closer neighbors, it follows that $p_\ell = \Phi_{1:\ell}(q)$, meaning $p_\ell$ was the first nearest neighbor at its time of insertion.

\textbf{Case 2.} If $i_1=1$ or $\ell \notin [1, i_1-1]$, without loss of generality, assume $i_j < \ell < i_{j+1}$ (where $i_{k+1} := n$). In the prefix set $P_{1:\ell}$, the set of $j+1$ nearest neighbors to $q$ consists of $\{p_{i_1}, p_{i_2}, \dots, p_{i_j}, p_\ell\}$, where $p_\ell$ is the $(j+1)$-th nearest. According to the Voronoi adjacency property established in Section~\ref{sec:Voronoi_and_delaunay}, the insertion of $p_\ell$ pruned the Voronoi cell of some $p_{i_m}$ for $m \in \{1, \dots, j\}$. By the definition of successor lists in Section~\ref{sec:preliminaries}, $p_\ell$ must reside within the successor list of one of the $k$ nearest neighbors.

As these cases exhaust all possible intervals for $\ell$, the proof is complete.
\end{proof}

\begin{algorithm} 
\LinesNumbered
\DontPrintSemicolon
\SetKwComment{tcp}{}{}

\KwIn{Point set $P = \{p_1, \ldots, p_n\}$, Successor Table $\{\mathcal{L}_i\}_{i=1}^n$, query point $q$, specified number $k$.
} 
\KwOut{Ordered list $\mathcal{N}$ of at most $k$ transition sites satisfying 
$\Phi_{1:\bullet}(q) = \bullet$, sorted by their Euclidean distance to~$q$.}
{
$\mathcal{N} \gets$ empty list of capacity $k$\;

    $p_{\text{curr}}:=p_1$

    $\mathcal{L}_{\text{curr}}:=\mathcal{L}_{p_1}$

  Insert $p_{\text{curr}}$ into $\mathcal{N}$ 
  
  \ForEach{$p\in\mathcal{L}_{\text{curr}}$}{
    \If{$\|q-p\| < \|q-p_{\text{curr}}\|$}{

          $p_{\text{curr}}:=p$

    $\mathcal{L}_{\text{curr}}:=\mathcal{L}_p$

    Insert $p_{\text{curr}}$ into $\mathcal{N}$ 
    
    }
  }

\KwRet\ $\mathcal{N}$;
}
\caption{Efficient collection of transition sites during 1-NN query.}
\label{alg:TransitionPointsCollectionQuery} 
\end{algorithm}

\subsection{Dynamic Programming Algorithm for $k$-Nearest Neighbor Search}

We formalize the $k$-NN search procedure by synthesizing the geometric cases discussed previously. 

\paragraph{Transition Sites}
According to Theorem~\ref{thm:knn}, if a subsequent nearest neighbor $p_\ell$ (where $\ell$ corresponds to the $2\textsuperscript{nd}, 3\textsuperscript{rd}, \dots, k\textsuperscript{th}$ neighbor) resides within the prefix interval (Case 1), it must satisfy the condition $p_\ell = \Phi_{1:\ell}(q)$. This implies that $p_\ell$ was the first nearest neighbor to $q$ at its specific time of insertion.
Consequently, $p_\ell$ must be one of the sites encountered during the incremental search for the first nearest neighbor.
This observation motivates us to cache these sites, which we term \textbf{Transition Sites} (defined as sites $p_j$ satisfying $p_\bullet = \Phi_{1: \bullet }(q)$), during the initial 1-NN search. By tracking these sites, we bypass the need for an exhaustive search within the prefix interval $[1, i_1-1]$. While the most recently identified Transition Site $p_\ell$ with $\ell < i_1$ is a primary candidate for the next neighbor, a more robust implementation considers all collected Transition Sites as potential candidates. Algorithm~\ref{alg:TransitionPointsCollectionQuery} details the efficient collection of these sites during a standard 1-NN query.

\paragraph{DP-based $k$-NN Search}
Given a fixed value of $k$, Algorithm~\ref{alg:knn} outlines the comprehensive procedure for identifying the full set of neighbors. We maintain a sorted candidate list $\mathcal{N}$ of maximum capacity $k$ to store the best sites found thus far, ordered by their Euclidean distance to $q$. The algorithm utilizes an outer loop to iteratively confirm the $i$-th nearest neighbor for $i = 1, \dots, k$. Upon the confirmation of each $i$-th neighbor, the inner loop traverses its respective successor list. Each successor is evaluated and inserted into $\mathcal{N}$ at its appropriate rank. Any site failing to rank within the top $k$ distances is naturally pruned, ensuring that the search remains focused only on the most promising geometric candidates.

\paragraph{Complexity Analysis}
The preprocessing phase is dominated by incremental Delaunay construction, 
requiring $O(n \log n)$ average time in 3D~\cite{10.1145/777792.777823}, with a 
worst-case complexity of $O(n^2)$ that rarely occurs in practice. For query 
complexity, \citet{11165079} showed that 1-NN search runs in 
$O(\log n)$ expected time. Since the average length of a successor list is 
$O(\log n)$ and our algorithm explores successors for each of the $k$ neighbors, 
the expected complexity of $k$-NN search is $O(k \log n)$.

\paragraph{Prefix-Subset and Multiresolution Queries} 

A distinct advantage of this formulation is that by ignoring candidates with indices exceeding a prescribed threshold $m$, the search is strictly confined to the prefix subset $P_{1:m}$. Consequently, our method inherently supports \emph{prefix-subset queries} with zero overhead. If the birth-time labels are organized according to a spatial hierarchy (e.g., a coarse-to-fine sampling order), our algorithm facilitates $k$-NN queries at any desired geometric resolution without requiring modifications to the underlying data structure.

\begin{algorithm}
\LinesNumbered
\DontPrintSemicolon
\SetKwComment{tcp}{}{}
\KwIn{Point set $P = \{p_1, \ldots, p_n\}$, Successor Table $\{\mathcal{L}_i\}_{i=1}^n$, query point $q$, specified number $k$}
\KwOut{$k$ nearest neighbors of $q$}

Initialize $\mathcal{N}$ by Algorithm~\ref{alg:TransitionPointsCollectionQuery}\;

\tcp*[l]{\textcolor{myorange}{$\triangleright$ The 1st nearest neighbor has been determined}}
\For{$i = 1$ to $k-1$}{
    \tcp*[l]{\textcolor{myorange}{$\triangleright$ Take the $i$-th nearest site that has been determined at this moment}}
    $p \gets \mathcal{N}[i]$\;
    
    \ForEach{$p_\text{successor} \in \mathcal{L}_p$}{
        \If{$p_\text{successor} \notin \mathcal{N}$}{
            Insert $p_\text{successor}$ into $\mathcal{N}$ (ordered by distance to~$q$)\;
        }
    }
}

\KwRet\ $\mathcal{N}$;
\caption{$k$-Nearest Neighbor Query}
\label{alg:knn}
\end{algorithm}

\begin{algorithm}
\LinesNumbered
\DontPrintSemicolon
\SetKwComment{tcp}{}{}
\KwIn{Point set $P = \{p_1, \ldots, p_n\}$, Successor Table $\{L_i\}_{i=1}^n$, point $p_i$ to delete}
\KwOut{Updated Successor Table}

\tcp*[l]{\textcolor{myorange}{$\triangleright$ Neighbors of $p_i$ at insertion; mutual adjacencies unchanged by deletion}}
$N \gets \{p_j \mid p_i \in L_j\}$\;

$\mathcal{D} \gets$ Delaunay triangulation of $N$\;

\ForEach{$p_k \in L_i$}{
    Insert $p_k$ into $\mathcal{D}$\;
    
    $A \gets$ Points in $\mathcal{D}$ adjacent to $p_k$\;
    
   \ForEach{$p_j \in A$}{
    \tcp*[l]{\textcolor{myorange}{$\triangleright$ Skip if already adjacent before deletion}}
    \If{$p_k \notin L_j$}{
    \tcp*[l]{\textcolor{myorange}{$\triangleright$ Verify if adjacency is induced by $p_i$'s removal}}
        $V \gets$ Voronoi vertices shared by $p_j$ and $p_k$\;
        
        \If{$\forall v \in V$ such that $\|v - p_i\| < \|v - p_k\|$}{
            % Append $p_k$ to $L_j$\;
            Insert $p_k$ into $L_j$ maintaining insertion order\;
        }
    }
}
}

\tcp*[l]{\textcolor{myorange}{$\triangleright$ Clean up references to $p_i$}}
\ForEach{$p_j \in N$}{
    Remove $p_i$ from $L_j$\;
}
Delete $L_i$\;

\KwRet Updated $\{L_j, \forall j\neq i\}$\;
\caption{Point Deletion}
\label{alg:deletion}
\end{algorithm}

\subsection{Site Deletion via Local Updates}

While the DP-NNS framework~\cite{11165079} inherently supports point insertion through incremental successor table construction, site deletion poses a significant challenge. Because the successor lists are deeply coupled with the specific construction history, removing a site is non-trivial, as it potentially invalidates the ``birth-time'' dependencies of its descendants.

We address this by exploiting the locality of geometric dependencies. The influence of any site $p_i$ is spatially and topologically confined by its surrounding Delaunay cells. This locality allows us to identify and update only the affected successor lists via local retriangulation, bypassing the need for a costly global reconstruction. Our approach ensures that the state of the successor table following the deletion of $p_i$ is mathematically equivalent to the configuration resulting from the sequential insertion of the point set $P \setminus \{p_i\}$ in its original relative order.

\begin{lemma} \label{lem:affected_lists1} 
When a site $p_i$ is deleted from the point set $P$, any existing successor list entry $p_x \in L_y$ remains valid, provided that $x \neq i$ and $y \neq i$. That is, adjacency relationships in the construction history that do not involve $p_i$ are invariant under its removal. 
\end{lemma}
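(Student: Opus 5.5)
The plan is to reduce the membership $p_x \in L_y$ to a Voronoi adjacency statement in a single prefix diagram, and then show that deleting a site can only enlarge the shared Voronoi face. By the definition in Section~\ref{sec:preliminaries}, $p_x \in L_y$ means $y < x$ and the insertion of $p_x$ pruned $\text{Cell}(p_y;\mathcal{V}_{1:x-1})$. Equivalently, $p_x$ and $p_y$ are Voronoi neighbours (Delaunay-adjacent) in $\mathcal{V}_{1:x}$. The portion removed from $p_y$'s old cell is exactly the region $\{z \in \text{Cell}(p_y;\mathcal{V}_{1:x-1}) : \lVert z-p_x\rVert < \lVert z-p_y\rVert\}$. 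Its boundary against the remaining part of $p_y$'s cell is the common face
\[
F_{xy} = \{ z : \lVert z-p_x\rVert = \lVert z-p_y\rVert \le \lVert z-p_t\rVert \ \ \forall t \le x \}.
\]
I will work under the general-position assumption implicit in the paper, so that a nonempty (2-dimensional) $F_{xy}$ is equivalent to $p_x$ pruning $p_y$'s cell. The target is the analogous statement for the modified insertion sequence $P\setminus\{p_i\}$ in its original relative order: $p_x$ prunes $p_y$'s cell when inserted into the diagram of $P_{1:x-1}\setminus\{p_i\}$.

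\textbf{Case $i > x$.} The prefixes $P_{1:x-1}$ and $P_{1:x}$ are unchanged by the deletion. The diagrams $\mathcal{V}_{1:x-1}$ and $\mathcal{V}_{1:x}$ are therefore identical before and after, and the entry $p_x \in L_y$ is trivially still correct.

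\textbf{Case $i < x$.} Here $p_i$ belongs to the prefix, so let $F'_{xy}$ denote the common face computed in the diagram of $P_{1:x}\setminus\{p_i\}$. Its defining inequalities are those of $F_{xy}$ with the single constraint $t=i$ dropped. Hence $F_{xy} \subseteq F'_{xy}$. In words, removing a site can only relax the constraints, so every Voronoi cell weakly grows and every existing shared face persists and weakly grows. Since $F_{xy}$ is nonempty and full-dimensional, so is $F'_{xy}$. Thus $p_x$ and $p_y$ remain adjacent in the modified prefix diagram.

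To finish this case, I translate adjacency back into pruning. Pick any point $z$ in the relative interior of $F'_{xy}$ and perturb it slightly towards $p_x$. The perturbed point lies strictly closer to $p_x$ than to $p_y$, and by continuity it is still closer to $p_y$ than to every other site of $P_{1:x-1}\setminus\{p_i\}$. So this point lies in $p_y$'s cell before $p_x$ is inserted and is taken away by $p_x$, which means $p_x$ prunes that cell. The entry $p_x \in L_y$ is therefore consistent with the history of $P\setminus\{p_i\}$. Because the lists are kept in insertion order and relative order is preserved, the position of the entry is also unaffected.

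\textbf{Main obstacle and scope.} The main obstacle is the equivalence ``prunes $\iff$ adjacent in $\mathcal{V}_{1:x}$'' in degenerate configurations, where a shared face could shrink to a lower-dimensional set. I would handle this by invoking general position (or a symbolic perturbation consistent with the Delaunay construction), under which the monotonicity argument above goes through cleanly. I would also note explicitly that the lemma claims only \emph{validity} of the surviving entries, not completeness. New adjacencies created by removing $p_i$ are exactly what Algorithm~\ref{alg:deletion} adds, and these are not needed for this lemma.
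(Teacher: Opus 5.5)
Your proposal is correct and follows the paper's proof: the same split into $i > x$, where the prefix is unchanged, and $i < x$, with the key step that deleting a site cannot destroy an existing Voronoi adjacency in $\mathcal{V}_{1:x}$. You make that monotonicity explicit through $F_{xy} \subseteq F'_{xy}$ (dropping the $t=i$ constraint), and you add the perturbation step that turns adjacency back into pruning; the paper simply asserts both.
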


\begin{proof} 
Consider an entry $p_x \in L_y$ where $x, y \neq i$. By definition, this entry signifies that sites $p_x$ and $p_y$ are adjacent in the Voronoi diagram $\mathcal{V}_{1:x}$ constructed from the prefix set $P_{1:x} = \{p_1, \dots, p_x\}$. We evaluate the impact of deleting $p_i$ based on its insertion order relative to $x$:

\textbf{Case 1: $i > x$.} In this case, $p_i \notin P_{1:x}$. The site $p_i$ was not present in the construction history at the moment the adjacency between $p_x$ and $p_y$ was established. Consequently, its subsequent removal from the global set $P$ cannot retroactively alter the topology of $\mathcal{V}_{1:x}$.

\textbf{Case 2: $i < x$.} Here, $p_i$ is a member of the prefix set $P_{1:x}$. However, a fundamental monotonicity property of Voronoi diagrams states that the removal of a site may cause existing Voronoi cells to expand and potentially form new adjacencies, but it cannot destroy existing adjacencies between the remaining sites. Since $p_x$ and $p_y$ were already adjacent in $\mathcal{V}_{1:x}$ in the presence of $p_i$, they are guaranteed to remain adjacent in the modified diagram $\mathcal{V}_{1:x} \setminus \{p_i\}$.

In both cases, the geometric condition for the entry $p_x \in L_y$ is preserved, confirming its continued validity in the updated successor table.
\end{proof}

Consequently, the reconstruction process following the deletion of $p_i$ reduces to two specific maintenance tasks: 
\begin{enumerate}
    \item \textbf{Pruning:} Removing $p_i$ from all successor lists that contain it, i.e., $\{L_j \mid p_i \in L_j\}$.
    \item \textbf{Redistribution:} Identifying the new adjacencies that emerge to fill the region previously occupied by $p_i$'s Voronoi cell.
\end{enumerate}

\begin{figure}[htbp]
    \centering    
    \includegraphics[width=.99\linewidth]{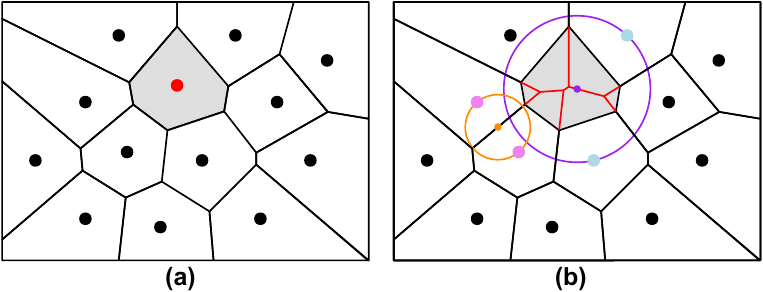}
    \caption{Voronoi adjacency evolution following site deletion. \textbf{(a)} The original diagram highlighting the target site (red) and its corresponding Voronoi cell (gray). \textbf{(b)} The updated diagram after removal, where new Voronoi edges (red lines) emerge to partition the vacated region. The local topology changes are governed by the empty circumcircle property: if two sites originally share a Voronoi edge, their defining circumcircle remains empty after deletion, thus preserving their adjacency. Conversely, the circumcircle associated with previously non-adjacent sites (cyan) becomes empty only upon the removal of the red site, signifying the formation of a new adjacency relationship that fills the vacancy.}
    \label{fig:pointDelete}
\end{figure}

\begin{lemma} \label{lem:new_adjacency} 
Let $\mathcal{V}(P)$ be the Voronoi diagram of the point set $P = \{p_1, \dots, p_m\}$. When a site $p_i$ is deleted from $P$, suppose two sites $p_x, p_y \in P \setminus \{p_i\}$ become adjacent in $\mathcal{V}(P \setminus \{p_i\})$ but were not adjacent in the original diagram $\mathcal{V}(P)$. Then the following properties hold:
\begin{enumerate} 
    \item Every point on the Voronoi edge shared by $p_x$ and $p_y$ in $\mathcal{V}(P \setminus \{p_i\})$ lies within the interior of $\mathrm{Cell}(p_i, \mathcal{V}(P))$.
    \item Both $p_x$ and $p_y$ are Voronoi neighbors of $p_i$ in $\mathcal{V}(P)$. 
\end{enumerate} 
\end{lemma}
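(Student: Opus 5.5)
The plan is to argue directly from the metric definition of Voronoi cells and compare distances in $P$ with distances in $P' = P \setminus \{p_i\}$. Write $d_P(z) = \min_{p \in P} \|z - p\|$. For every $z$ we have $d_{P'}(z) \ge d_P(z)$, with equality unless $p_i$ is the unique nearest site of $z$ in $P$.

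\textbf{Part (1).} Take any point $z$ on the shared face $F = \mathrm{Cell}(p_x, \mathcal{V}(P')) \cap \mathrm{Cell}(p_y, \mathcal{V}(P'))$. Then
\[
\|z - p_x\| = \|z - p_y\| = d_{P'}(z).
\]
Suppose $z$ were not in the interior of $\mathrm{Cell}(p_i, \mathcal{V}(P))$. Then either $\|z - p_i\| > d_{P'}(z)$, or there is a tie $\|z - p_i\| = d_{P'}(z)$.

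In the first case, $z$ is equidistant from $p_x$ and $p_y$ and no site of $P$ is closer. So $z$ lies on the common boundary of their cells in $\mathcal{V}(P)$.

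The tie case places $z$ on the boundary of $p_i$'s cell. It is handled by the standard general-position assumption (no five cospherical sites), or by noting that $F$ is a relatively open convex face, so a boundary tie at a single point can be pushed into the open face.

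To conclude, I would strengthen the first case to an open set. If a relatively open subset of $F$ lies outside $\mathrm{Cell}(p_i, \mathcal{V}(P))$, then $p_x$ and $p_y$ share a face of the same dimension in $\mathcal{V}(P)$, contradicting non-adjacency. This shows the open face $F$ lies in $\mathrm{Cell}(p_i, \mathcal{V}(P))$. Strictness (interior) then follows because points of $F$ satisfy $\|z - p_i\| < \|z - p_x\|$, the opposite inequality having just been excluded. An equivalent formulation uses empty spheres: for $z \in F$, the ball centred at $z$ through $p_x$ and $p_y$ is empty of $P'$. If it were also empty of $p_i$, then $p_x$ and $p_y$ would be Delaunay-adjacent in $P$. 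Hence $p_i$ lies strictly inside that ball.

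\textbf{Part (2).} Take a point $z$ in the relative interior of $F$. By (1), $z$ lies in the interior of $\mathrm{Cell}(p_i, \mathcal{V}(P))$. Move along a segment from $z$ toward the interior of $\mathrm{Cell}(p_x, \mathcal{V}(P'))$, staying close to $F$. Near $F$, the nearest site in $P'$ along this path is $p_x$.

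Now follow the segment from $z$ to $p_x$ itself. Along it, $p_x$ stays a nearest $P'$-site: the cell of $p_x$ is convex and contains both $z$ and $p_x$. The segment starts inside $\mathrm{Cell}(p_i, \mathcal{V}(P))$ and ends at $p_x$, which lies in $\mathrm{Cell}(p_x, \mathcal{V}(P))$. So it must leave $\mathrm{Cell}(p_i, \mathcal{V}(P))$, and at the exit point $w$ the nearest sites in $P$ are $p_i$ together with the nearest $P'$-site, which is $p_x$. By convexity of $\mathrm{Cell}(p_x, \mathcal{V}(P'))$, the segment stays in that cell throughout, so this tie genuinely involves $p_x$.

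This gives $\|w - p_i\| = \|w - p_x\| = d_P(w)$, so $w$ lies on the common boundary of $\mathrm{Cell}(p_i)$ and $\mathrm{Cell}(p_x)$ in $\mathcal{V}(P)$. The same argument applies to $p_y$.

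The main obstacle is making this shared boundary a genuine facet, so that it counts as an adjacency rather than a degenerate contact. I would handle it by perturbing $z$ within an open neighbourhood in $F$, which yields an open set of exit points, together with the general-position assumption.
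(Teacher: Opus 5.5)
Your proposal is correct and follows the paper's route. Part 1 is the same empty-ball contradiction: if $\|z-p_i\|\ge r:=\|z-p_x\|$, the ball centred at $z$ through $p_x,p_y$ is empty of all of $P$, so they were already adjacent. Your Part 2 argument (the segment from $z$ to $p_x$ and its exit point from $\mathrm{Cell}(p_i,\mathcal{V}(P))$) is simply the standard proof of the fact the paper cites without proof: the second-nearest site of a point inside $\mathrm{Cell}(p_i)$ is a Voronoi neighbour of $p_i$.

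One small repair: the open-set detour in Part 1 only rules out $\|z-p_i\|>\|z-p_x\|$. Strictness therefore does not follow from it as you claim. Ties must be excluded either by your empty-sphere formulation, which like the paper takes the contradiction hypothesis as $\|z-p_i\|\ge r$, or by general position.
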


\begin{proof} 
\textbf{Part 1.} Let $c$ be any point on the Voronoi edge shared by $p_x$ and $p_y$ in $\mathcal{V}(P \setminus \{p_i\})$, with $r = \|c - p_x\| = \|c - p_y\|$. By the empty sphere property of Voronoi diagrams, the open ball $B(c, r)$ contains no points from the set $P \setminus \{p_i\}$: 
\begin{equation} 
\|c - p_k\| > r, \quad \forall p_k \in P \setminus \{p_i, p_x, p_y\}. 
\end{equation}

We prove $c \in \mathrm{Cell}(p_i, \mathcal{V}(P))$ by contradiction. W.l.o.g, we assume $c \notin \mathrm{Cell}(p_i, \mathcal{V}(P))$. This implies that $p_i$ is not the unique nearest neighbor to $c$ in $P$; thus, $\|c - p_i\| \geq r$. Consequently, the ball $B(c, r)$ would contain no points from the original set $P$ (as illustrated in Figure~\ref{fig:pointDelete}b). The existence of such an empty ball passing through $p_x$ and $p_y$ implies they were already adjacent in $\mathcal{V}(P)$, which contradicts our initial hypothesis. Therefore, we must have $\|c - p_i\| < r$, confirming that $c \in \mathrm{Cell}(p_i, \mathcal{V}(P))$.

\textbf{Part 2.} A fundamental property of Voronoi diagrams states that for any point $q$ strictly inside $\mathrm{Cell}(p_i)$, its second-nearest neighbor in $P$ must be a Voronoi neighbor of $p_i$. As established in Part 1, the point $c$ lies within $\mathrm{Cell}(p_i, \mathcal{V}(P))$. Since $p_x$ and $p_y$ are the nearest neighbors to $c$ in $P \setminus \{p_i\}$, they necessarily constitute the second-nearest neighbors to $c$ in the original set $P$. By the stated property, both $p_x$ and $p_y$ must have been Voronoi neighbors of $p_i$ in $\mathcal{V}(P)$.
\end{proof}

Following the deletion of $p_i$, we update the successor table using a localized reconstruction strategy derived from Lemma~\ref{lem:affected_lists1} and Lemma~\ref{lem:new_adjacency}. While Lemma~\ref{lem:affected_lists1} ensures that adjacencies not involving $p_i$ are invariant, Lemma~\ref{lem:new_adjacency} establishes that new adjacencies can only emerge between the former neighbors of $p_i$.

Our approach focuses on identifying these new adjacencies within $p_i$'s neighborhood. We extract all sites that were ever adjacent to $p_i$ during the incremental construction history: specifically, sites $p_j$ where $p_i \in L_j$ (pre-insertion neighbors) and sites already in $L_i$ (post-insertion neighbors). 
We then perform local incremental Delaunay construction on this restricted set, which is typically very small—around \textbf{25-35 points for deletion operations on point clouds ranging from 200K to 2M vertices}.
The computational cost is primarily governed by the length of the deleted point's successor list ($O(\log n)$ on average) and its neighborhood size at insertion time ($O(1)$ on average), yielding efficient average-case performance.
To ensure correctness, we filter out spurious adjacencies—those that do not result from $p_i$'s removal—by verifying that the shared Voronoi vertices of each candidate pair lie within the original $\mathrm{Cell}(p_i)$. Algorithm~\ref{alg:deletion} details the complete procedure.

\begin{figure}[htbp]
	\centering
\includegraphics
[width=.99\linewidth]{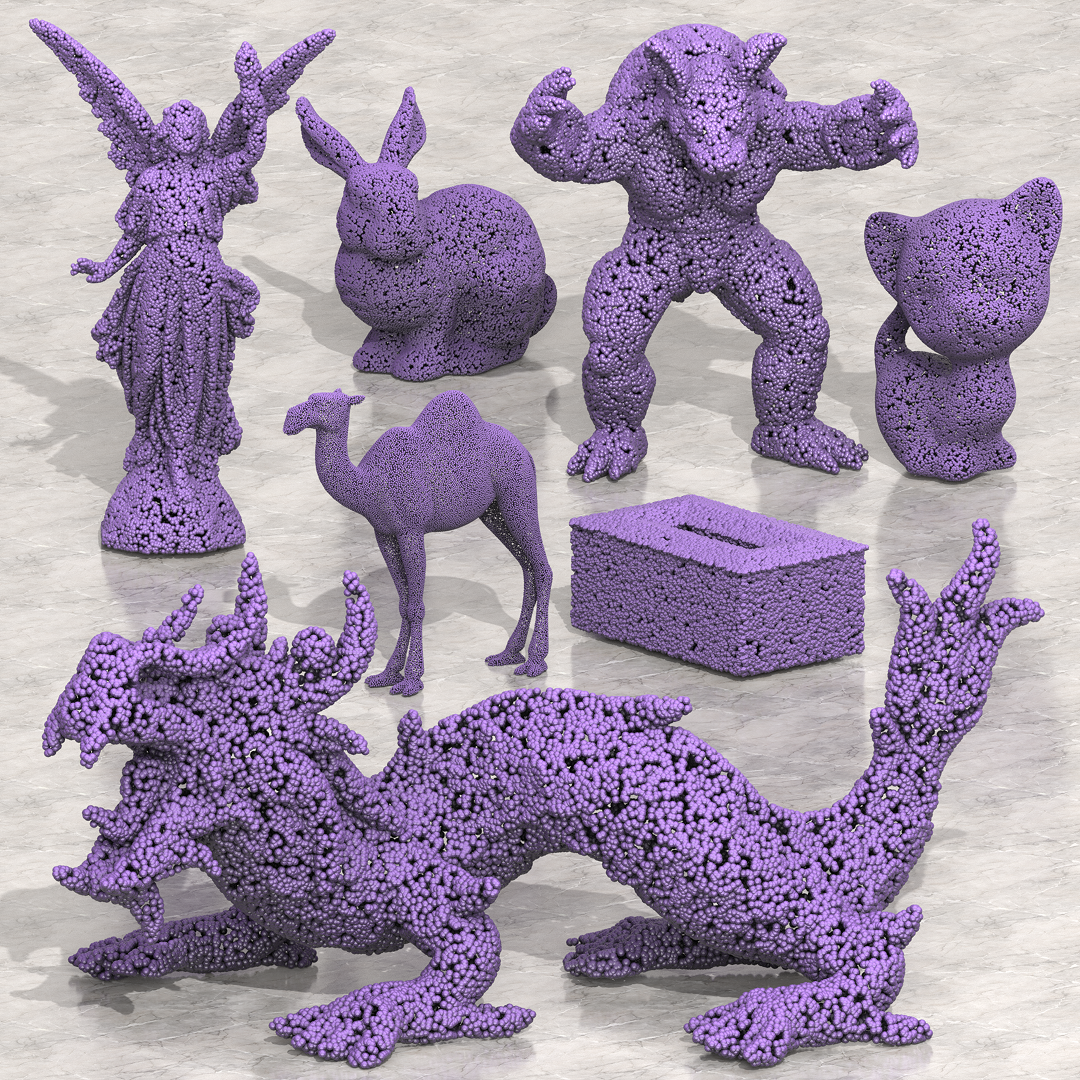}
\caption{Visualization of representative point cloud models used in experiments.}
\label{fig:models}
\end{figure}
\section{Evaluation}

\subsection{Implementation and Platform}
Our C++ implementation was evaluated on a Mac mini with an Apple M4 CPU and 16GB RAM, running macOS. We use CGAL~\cite{cgal:hs-chdt3-24a} for Delaunay triangulation construction. The input point cloud is preprocessed using CGAL's \texttt{spatial\_sort} function, which organizes points into random buckets of increasing sizes with Hilbert sorting applied within each bucket. This strategy accelerates incremental Delaunay construction while maintaining query performance.

In the experimental evaluation, we primarily compare our method against four 
established spatial indexing structures: (1) KD-tree from the nanoflann 
library~\cite{blanco2014nanoflann} with leaf size set to 10; (2) R$^{*}$-tree 
from the Boost library~\cite{BoostLibrary} with maximum leaf capacity of 10; 
(3) Octree from the Point Cloud Library (PCL)~\cite{Rusu_ICRA2011_PCL} with 
resolution set to 3 times the average inter-point distance; and (4) BD-tree 
from the ANN library, configured with a maximum leaf size of $10$ for exact $k$-nearest neighbor queries. Additionally, for particular test cases, we include comparisons with (5) ArborX~\cite{Prokopenko2025TheAL}, a BVH-based geometric search library optimized for large-scale nearest neighbor queries via Morton code-sorted traversal, and (6) a na\"{i}ve Delaunay-traversal baseline (DT-BFS). The latter utilizes the original successor table~\cite{11165079} to identify the $1$-NN, followed by a breadth-first search on the Delaunay adjacency graph to retrieve the remaining $k-1$ neighbors.
Figure~\ref{fig:models} shows several representative point cloud models used in our experiments.

\begin{figure}[htbp]
	\centering
\includegraphics
[width=.99\linewidth]{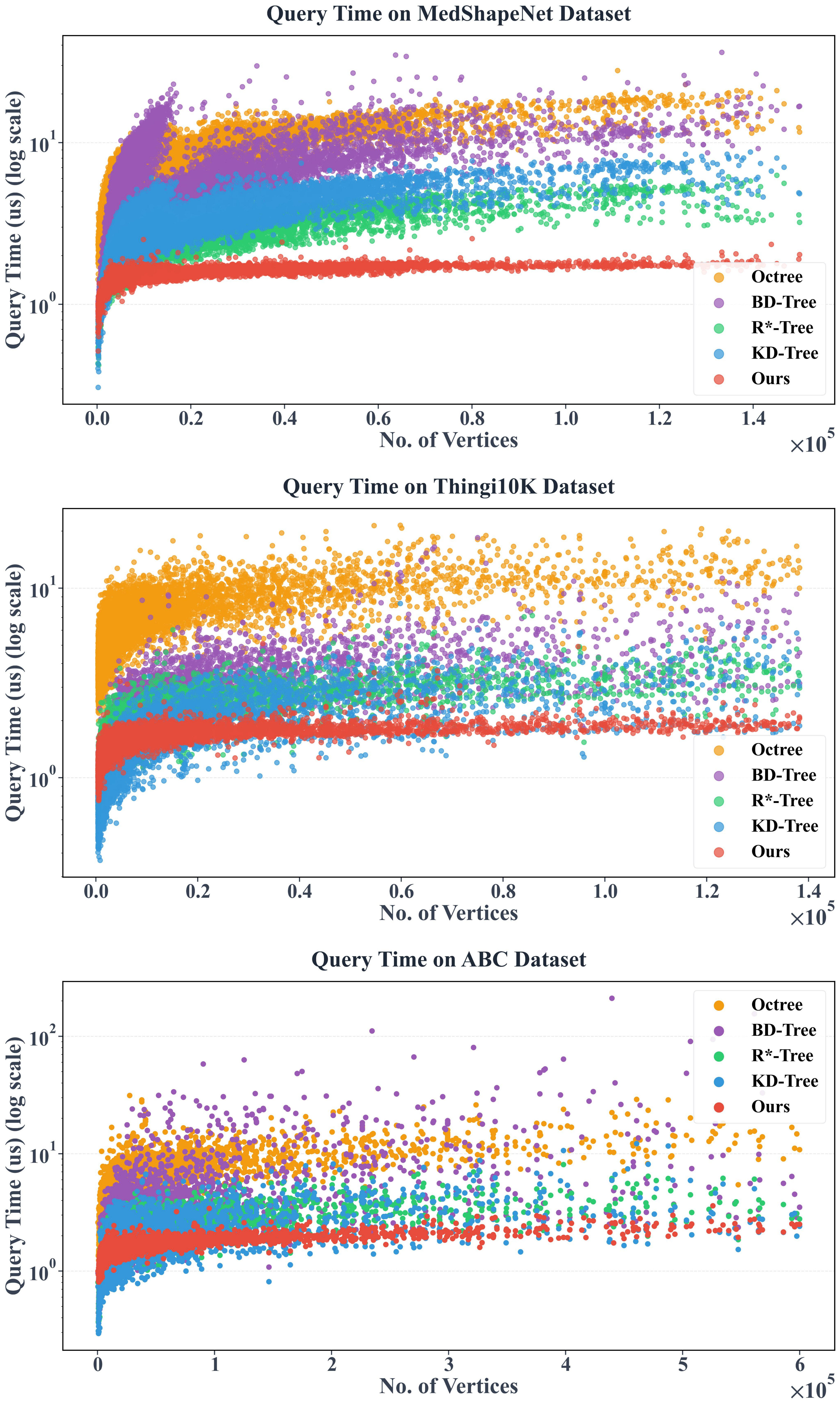}
\caption{Query time comparison across MedShapeNet, Thingi10K, and ABC datasets (from top to bottom).
Performance varies with geometric structure: MedShapeNet 
models exhibit clear low-dimensional manifolds favoring our method, while others 
contain complex geometries with more uniformly distributed points.}
\label{fig:queryrandom}
\end{figure}

\subsection{Static k-NN Performance}

We evaluate $k$-NN query performance in volume-to-surface scenarios, where query points are distributed in volumetric space while target points lie on manifold surfaces—a common setting in geometry processing.
We tested on three large-scale datasets—subsets from ABC~\cite{Koch_2019_CVPR} and MedShapeNet~\cite{li2023medshapenet}, along with Thingi10K~\cite{Thingi10K}—as well as standard 3D models.
For each model, we generated $10^6$ query points uniformly distributed within a $2\times$ axis-aligned bounding box with $k=20$. 
As shown in Figure~\ref{fig:queryrandom} and Table~\ref{table:query_time}, our method achieves $1\times$ to $10\times$ speedup over state-of-the-art spatial indexing structures.
Additionally, since ArborX demonstrates stronger performance on very large point sets, we evaluated our method on a uniform point cloud and the Lucy model, both containing $10$M points, with query points sampled within their $1\times$ and $2\times$ bounding boxes, respectively. Our approach achieves $1\times$ and $3\times$ speedups over ArborX, respectively, while maintaining $>2\times$ and $>10\times$ speedups compared to other state-of-the-art baselines.

\begin{figure*}[htbp]
	\centering
\includegraphics
[width=0.95\linewidth]{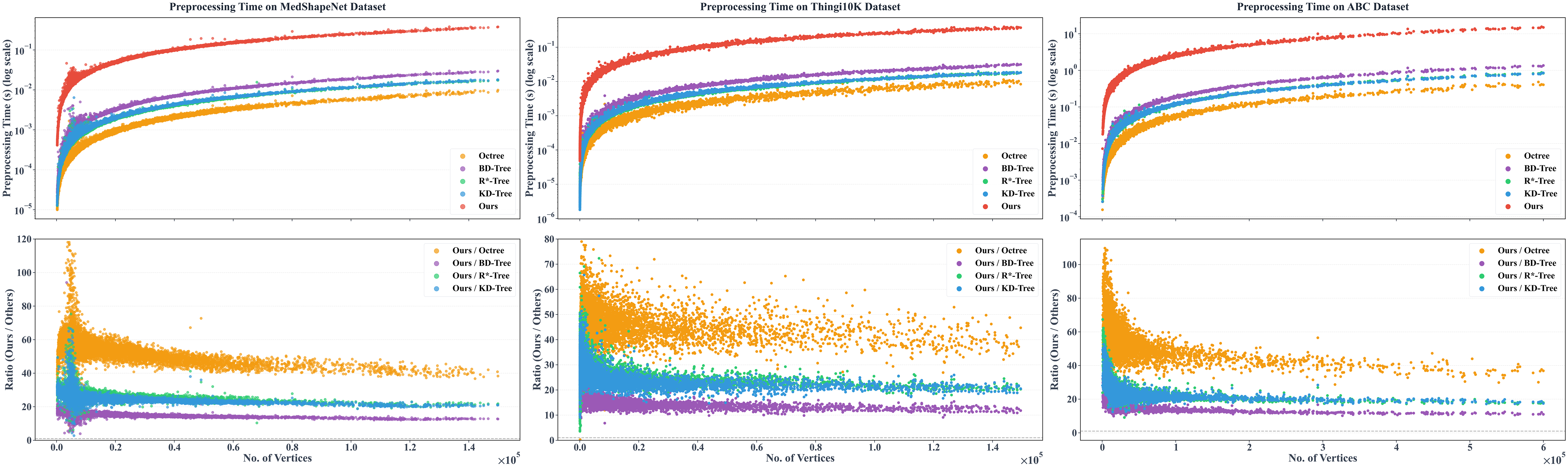}
\caption{Preprocessing time comparison across MedShapeNet, Thingi10K and ABC datasets (from left to right). Dashed line indicates equal performance (ratio =1).}
\label{fig:preprocessingTime}
\end{figure*}

\begin{table}[htbp]
\caption{Average query time ($\mu s$) comparison on different 3D models. \textbf{\underline{Best results}} are shown in bold underline, \underline{second-best} results are underlined.}

\centering\small

\resizebox{0.99\columnwidth}{!}{
  \renewcommand{\arraystretch}{1.25}
  \setlength{\tabcolsep}{7pt}
 
  \rowcolors{2}{white}{rowgray}
  \begin{tabular}{c|ccccccc}
  \hline
  \rowcolor{headerblue} & Camel & Bunny & Dragon & Kitten & Armadillo & Lucy & Sponza \\
  \hline
  Vertices   & 28934 & 72911 & 100313 & 291023 & 726367 & 1018219 & 1313504 \\
  \hline 
  KD-tree    & 2.892 & 5.026 & 3.900 & 9.562 & 11.56 & 10.13 & 4.045 \\
  R$^*$-tree & 2.420 & 3.589 & \ul{2.585} & 6.013 & 6.193 & 4.867 & \ul{3.002} \\
  Octree     & 7.822 & 13.38 & 10.99 & 23.57 & 29.37 & 25.28 & 19.48 \\
  BD-tree    & 4.455 & 7.631 & 8.704 & 18.55 & 118.7 & 38.83 & 7.092 \\
  ArBorX     & \ul{2.319} & 3.744 & 2.987 & 6.608 & 7.675 & 5.772 & 3.103 \\
  DT-BFS     & 2.708 & \ul{2.681} & 2.881 & \ul{3.337} & \ul{3.577} & \ul{4.024} & 5.554 \\
  Ours       & \ul{\textbf{1.704}} & \ul{\textbf{1.718}} & \ul{\textbf{1.797}} & \ul{\textbf{1.789}} & \ul{\textbf{1.972}} & \ul{\textbf{2.264}} & \ul{\textbf{2.843}} \\
  \hline
  \end{tabular}
}
\label{table:query_time}
\end{table}

Figure~\ref{fig:preprocessingTime} compares preprocessing time across different datasets. Our method requires longer construction time than the KD-tree baseline, approximately 20$\times$ slower, primarily due to the overhead of incremental Delaunay triangulation and Successor Table maintenance. 
However, this preprocessing cost is amortized in query-intensive applications, where the structure is built once but queried millions of times, making the trade-off favorable in many practical scenarios.
We highlight two recent representative works in which $k$-NN search over a point cloud is the central computational primitive:
\begin{itemize}

\item \textbf{POCO}~\cite{Boulch_2022_CVPR} (\textit{neural implicit surface reconstruction}) 
reconstructs surfaces from point clouds by decoding occupancy through $k$-NN-based 
feature interpolation, issuing millions of queries over a marching-cubes grid per 
reconstruction. The authors explicitly report $k$-nearest-neighbor computation as 
the dominant runtime bottleneck of their pipeline.

\item \textbf{Point-NeRF}~\cite{Xu2022PointNeRFPN} (\textit{point-based neural rendering}) 
synthesizes novel views by representing a scene as a neural point cloud, aggregating 
local neural features via $k$-NN queries at millions of shading locations along 
camera rays.
\end{itemize}

\begin{table}[htbp]
\caption{Scan termination optimization on progressive scans (100 frames, time in ms). Vertices indicates the total number of points in each model. Our method achieves significant speedup over traditional structures. \textbf{\underline{Best results}} are shown in bold underline, \underline{second-best} results are underlined.}
\centering\small
\resizebox{0.95\columnwidth}{!}{
  \renewcommand{\arraystretch}{1.25}
  \setlength{\tabcolsep}{7pt}
  \rowcolors{2}{white}{rowgray}
  \begin{tabular}{c|cccccc}
  \hline
  \rowcolor{headerblue} & Camel & Bunny & Dragon & Kitten & Armadillo & Lucy \\
  \hline
  Vertices   & 676695 & 1059183 & 459903 & 966442 & 889123 & 459473 \\
  KD-tree    & \ul{381.0} & \ul{545.8} & \ul{340.9} & \ul{520.0} & \ul{525.6} & \ul{336.1} \\
  R$^*$-tree & 402.4 & 585.6 & 364.8 & 542.0 & 559.2 & 360.2 \\
  Ours       & \ul{\textbf{57.83}} & \ul{\textbf{65.99}} & \ul{\textbf{48.03}} & \ul{\textbf{63.86}} & \ul{\textbf{67.15}} & \ul{\textbf{56.43}} \\
  \hline
  \end{tabular}
}
\label{table:scan_optimization}
\end{table}

\subsection{Prefix k-NN Query Performance}

A distinguishing feature of our framework is the ability to query $k$-NN within any prefix $\{p_1, \ldots, p_m\}$ with zero overhead. while traditional structures (KD-tree, R*-tree) require full reconstruction when $m$ changes. We evaluate this capability using synthetic progressive scans generated via Open3D~\cite{Zhou2018}, where a virtual depth sensor (320$\times$240 resolution) traverses a Fibonacci sphere trajectory around standard models (Bunny, Dragon, Armadillo, etc.), producing 100 temporally-ordered frames. This setup simulates progressive acquisition workflows where non-sequential temporal access is beneficial.

\subsubsection{Scan Convergence Analysis}

Determining the minimal scan duration for sufficient geometric coverage is a common calibration task in automated scanning. We apply binary search to identify the minimal frame count $m$ achieving 99\% surface convergence. At each candidate $m$, we sample 2,000 probe points on the ground truth surface and query their $k=20$ nearest neighbors within the first $m$ frames to fit local planes via PCA. A probe converges when its fitting residual falls below 0.1\% of the model diameter. 
This workflow requires evaluating non-sequential temporal states (e.g., testing $m=50$, then $m=75$, then $m=62$), which is prohibitively expensive for traditional structures due to repeated reconstruction. 
Table~\ref{table:scan_optimization} shows our method completes the binary search (6-7 iterations) significantly faster than KD-tree and R*-tree.
This demonstrates the practical value of zero-cost prefix switching in exploratory temporal queries.

\subsubsection{Temporal Defect Localization}

Progressive scans may accumulate artifacts during acquisition. We simulate this by injecting $15\times$ noise into frames 30-40 and examine how local neighborhoods evolve across temporal states. Specifically, we monitor the $k=100$ nearest neighbors of a query point at the suspected defect location while randomly adjusting $m$ to identify when irregularities first emerge.
Our method sustains 5,000--10,000 queries per second during random temporal switching, while KD-tree and R*-tree achieve only 10--20 queries per second due to reconstruction overhead at each $m$ change. This 250--500$\times$ speedup enables interactive exploration of temporal anomalies, transforming what would be a tedious batch analysis into a fluid debugging workflow.

\begin{figure}[htbp]
	\centering
\includegraphics
[width=.99\linewidth]{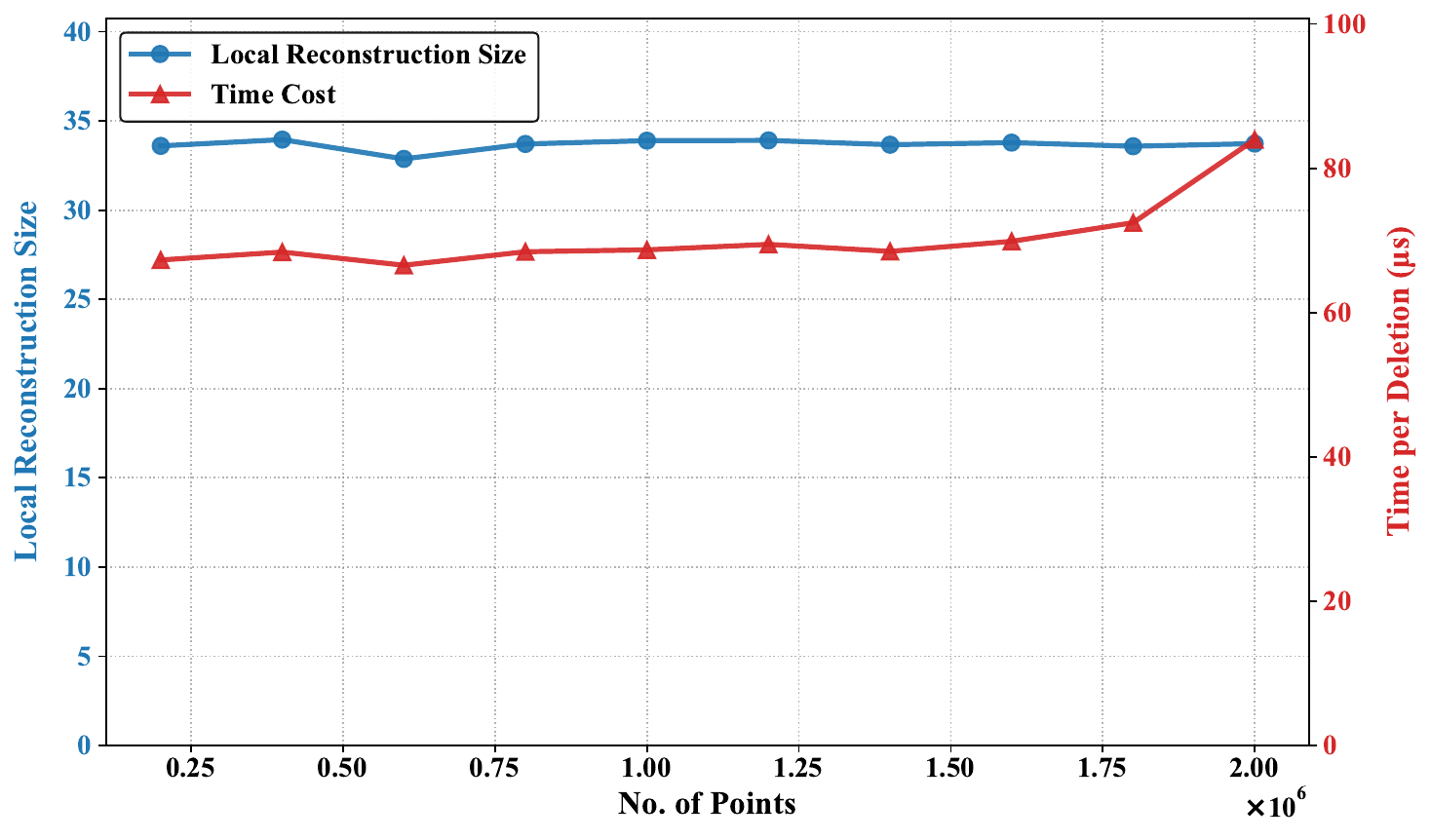}
\caption{Point deletion performance across different point cloud sizes. For each dataset size, we randomly delete 10,000 points and measure the average number of points involved in local Delaunay reconstruction per deletion, and the average deletion time per point. The local reconstruction size remains consistently small (approximately 33 points) regardless of total point cloud size, demonstrating the locality of our deletion strategy.}
\label{fig:deleteData}
\end{figure}

\begin{figure}[htbp]
	\centering
\includegraphics
[width=0.99\linewidth]{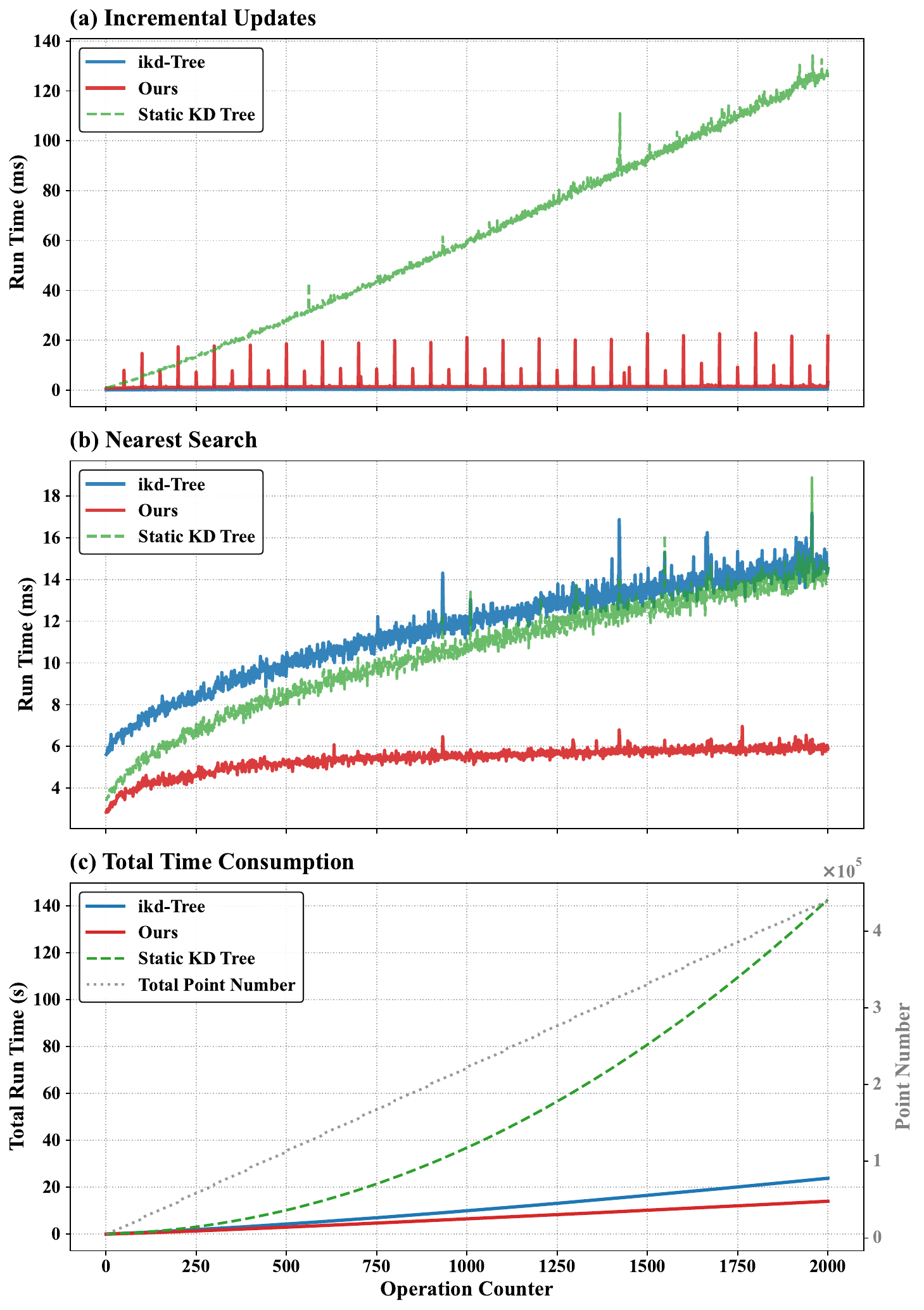}
\caption{Performance comparison under dynamic scenarios. Target points lie on the Dragon model surface while query points are randomly distributed in its $2\times$ axis-aligned bounding box. (a) Time per incremental update operation. (b) Query time per iteration. (c) Cumulative total time versus iteration number. Our method demonstrates superior overall performance compared to iKD-Tree and static KD-tree (nanoflann) for query-intensive workloads.}
\label{fig:dyanmicTest_SamplesOnSurfaceQueryRandom}
\end{figure}

\begin{figure}[htbp]
	\centering
\includegraphics
[width=0.99\linewidth]{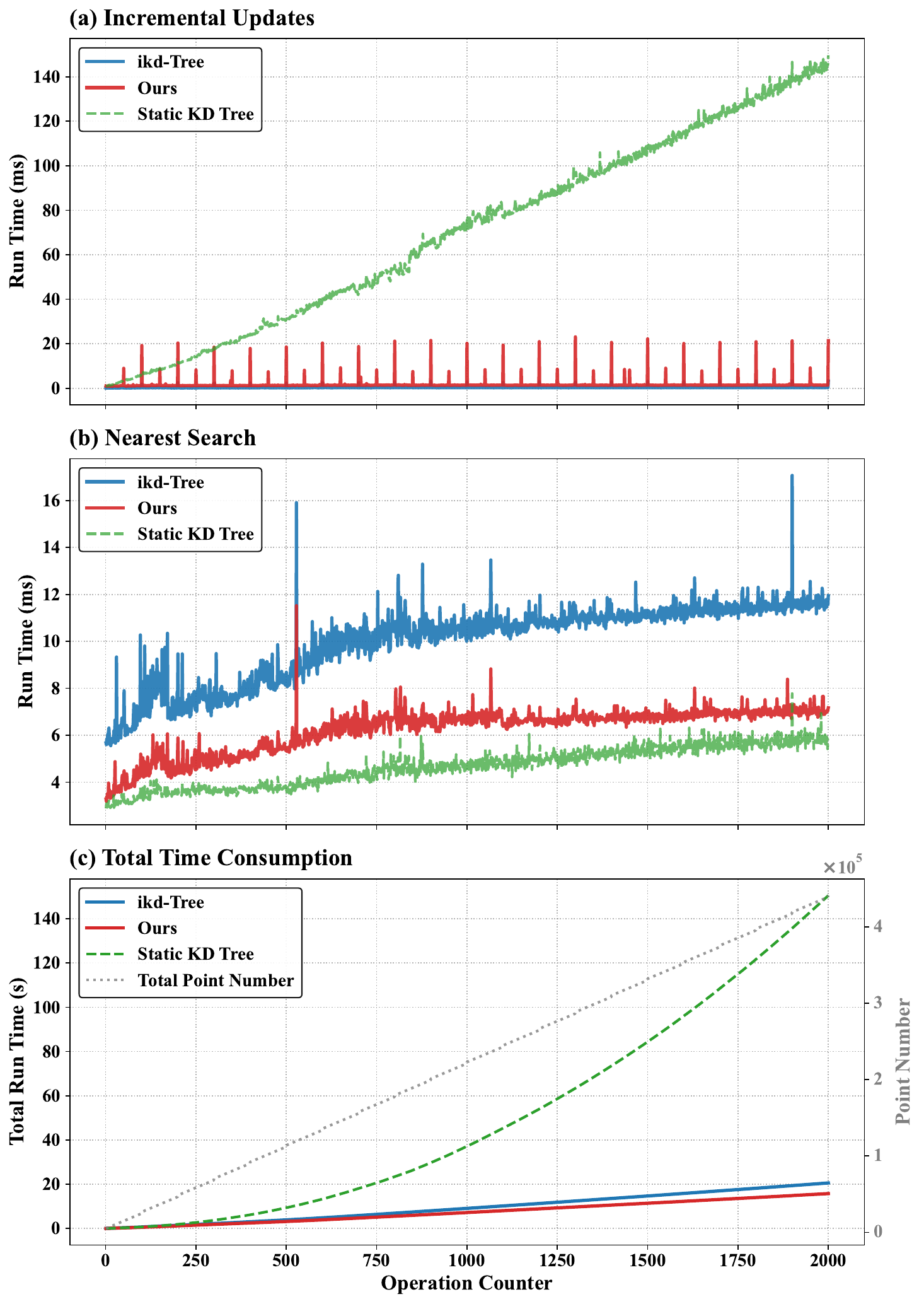}
\caption{Performance comparison under dynamic scenarios. Both target points and query points are uniformly distributed within $[-10, 10]^3$. (a) Time per incremental update operation. (b) Query time per iteration. (c) Cumulative total time versus iteration number. Our method demonstrates superior overall performance compared to iKD-Tree and static KD-tree (nanoflann) for query-intensive workloads.}
\label{fig:dyanmicTest_InputRandomQueryRandom}
\end{figure}

\subsection{Dynamic Point Operations}
\label{sec:DynamicPointOperations}

We extend the DP-NNS framework to support point deletion through local Delaunay reconstruction. When deleting a point, our approach identifies the affected neighborhood and performs incremental Delaunay construction only within this local point set to update the relevant Successor Lists, avoiding global rebuilding. Figure~\ref{fig:deleteData} demonstrates this locality: the affected neighborhood contains only ~$33$ points on average across point cloud sizes from 200K to 2M.
While not optimized for deletion-heavy workloads, our method excels in query-intensive scenarios where superior $k$-NN performance provides overall advantages.

We evaluate this through a dynamic benchmark adapted from ikd-Tree~\cite{xu2022fast}.
Starting with 5,000 randomly generated points, we execute 2,000 test iterations where each iteration performs: (1) insertion of 200 new points, (2) 2,000 query operations with $k=20$, (3) deletion of 100 points every 50 iterations, and (4) batch insertion of 2,000 points every 100 iterations. This configuration reflects workloads where queries and insertions dominate.

We test two scenarios. In the first, both initial and inserted points lie on the Dragon model surface while query points are uniformly distributed within the $2\times$ bounding box (Figure~\ref{fig:dyanmicTest_SamplesOnSurfaceQueryRandom})—representative of volume-to-surface query patterns. In the second, both target and query points are uniformly sampled from $[-10, 10]^3$ (Figure~\ref{fig:dyanmicTest_InputRandomQueryRandom}). Results show that static KD-tree incurs prohibitive reconstruction costs in dynamic settings. Our method achieves superior overall throughput compared to both baselines: the accelerated query performance compensates for slower individual deletion operations, resulting in better end-to-end efficiency in these query-intensive dynamic workloads.

\begin{table}[htbp]
\caption{Average query time ($\mu s$) comparison across different methods for varying $k$ values 
on the Lucy model.}
\centering\small
\resizebox{0.99\columnwidth}{!}{
  \renewcommand{\arraystretch}{1.25}
  \setlength{\tabcolsep}{7pt}
  \rowcolors{2}{white}{rowgray}
  \begin{tabular}{c|ccccccc}
  \hline
  \rowcolor{headerblue} $k$ & KD-tree & R$^*$-tree & Octree & BD-tree & ArBorX & DT-BFS & Ours \\
  \hline
  5  & 7.473 & 4.268 & 21.35 & 37.09 & 3.833 & 1.926 & \textbf{0.929} \\
  10 & 8.573 & 4.501 & 23.87 & 39.05 & 4.849 & 2.605 & \textbf{1.445} \\
  20 & 10.53 & 5.143 & 26.51 & 42.64 & 6.413 & 4.125 & \textbf{2.427} \\
  50 & 18.30 & 6.621 & 34.30 & 55.01 & 10.48 & 8.598 & \textbf{5.368} \\
  \hline
  \end{tabular}
}
\label{table:k_scaling}
\end{table}

\subsection{Scalability and Sensitivity Analysis}

\paragraph{Scalability with k.}
While $k=20$ represents a typical value for many geometry processing applications, we further evaluate how performance scales with neighborhood size by testing $k$ values ranging from $5$ to $50$ on the Lucy model. Because $k$ is generally small in practice, we organize candidates using insertion sort, following the strategy of nanoflann~\cite{blanco2014nanoflann}. This approach is highly efficient for small $k$ but may introduce overhead as $k$ increases. Table~\ref{table:k_scaling} compares our query costs against state-of-the-art methods, demonstrating competitive performance across various values of $k$.

\begin{figure}[htbp]
	\centering
\includegraphics
[width=.99\linewidth]{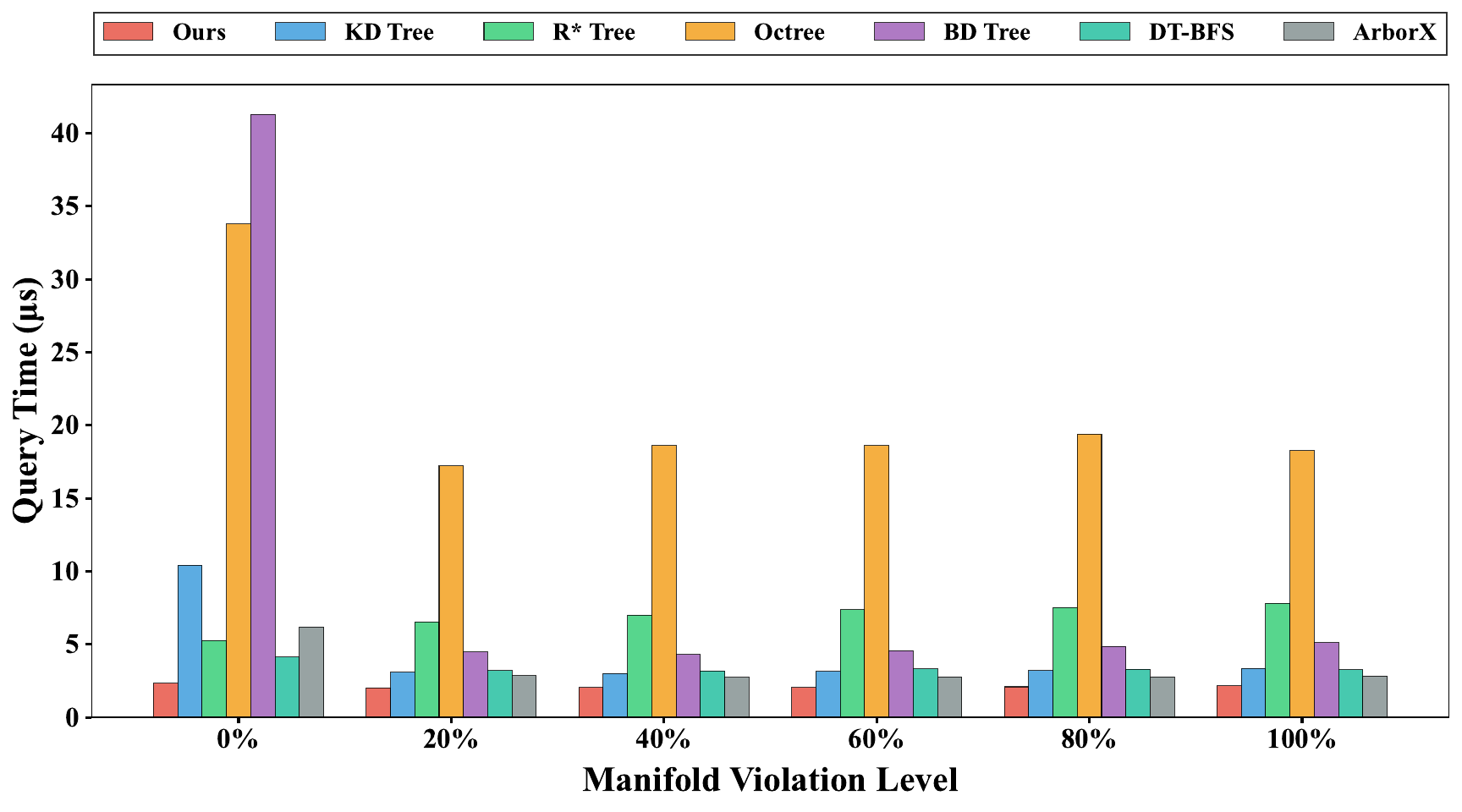}
\caption{Query Time Performance Comparison Across Manifold Violation Levels.}
\label{fig:manifold_degradation}
\end{figure}

\begin{figure}[htbp]
	\centering
\includegraphics
[width=.99\linewidth]{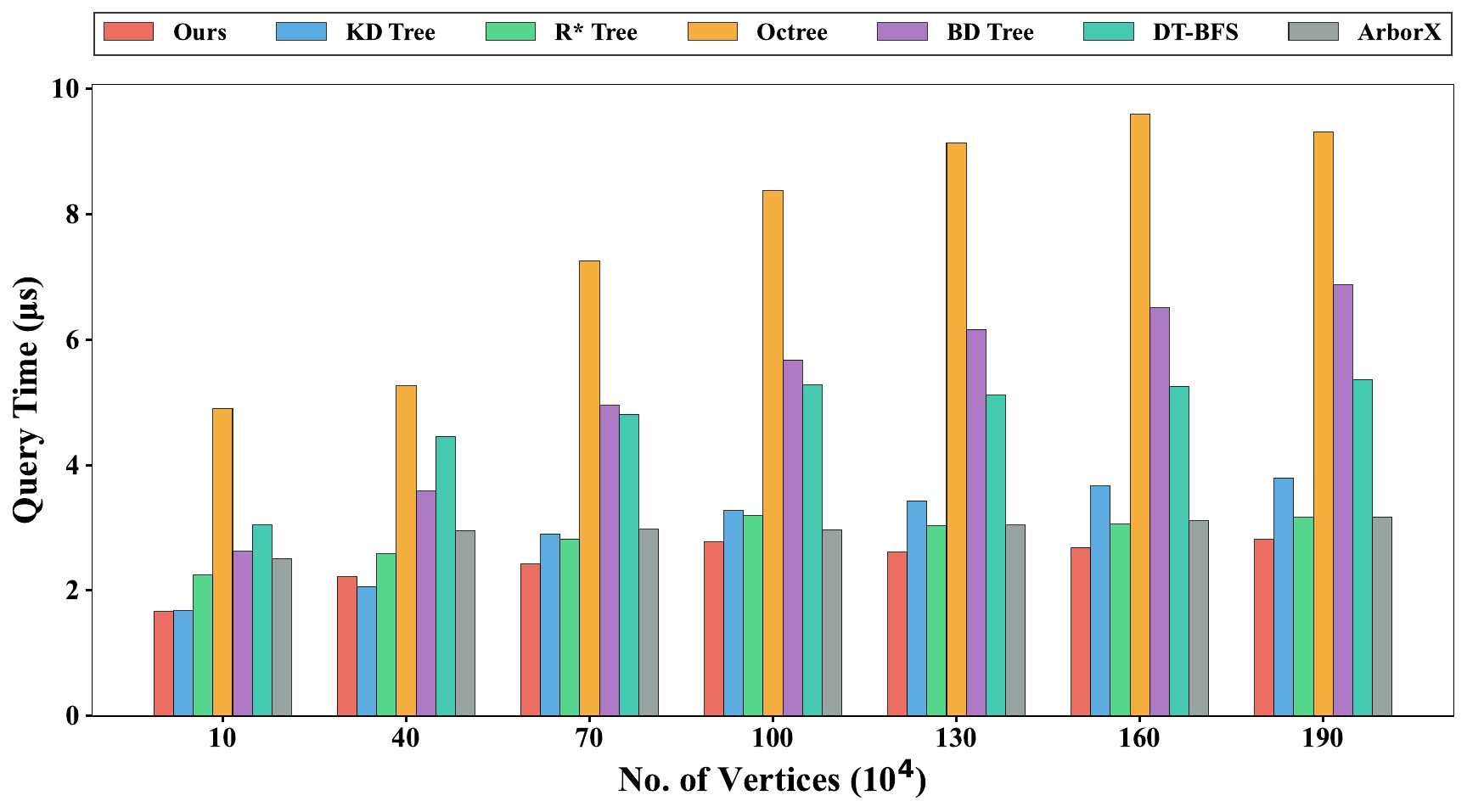}
\caption{Query time comparison on \textbf{uniform point clouds} with varying numbers of target points. All query points were generated within the \textbf{exact same spatial bounds} as the target points. Our method achieves query performance comparable to R$^{*}$-tree, KD-tree and ArborX while outperforming other baseline methods.}
\label{fig:randomPointrandomQuery}
\end{figure}

\paragraph{Sensitivity to Manifold Structure.}
To understand how performance depends on this geometric property, we evaluate query 
efficiency as the manifold structure is progressively degraded. We perturb the Lucy 
model by randomly displacing each point by varying fractions of the bounding box 
diagonal. Figure~\ref{fig:manifold_degradation} compares query time against 
state-of-the-art methods across different displacement ratios. While our advantage 
decreases as the manifold structure weakens, the method maintains competitive performance 
even under substantial geometric perturbation.
Nevertheless, even on fully uniform point clouds—where both query and target points are uniformly sampled within the same cubic volume—our method maintains competitive performance comparable to R$^{*}$-tree, KD-tree and ArborX while significantly outperforming other baselines (Figure~\ref{fig:randomPointrandomQuery}). This demonstrates that our approach remains effective beyond ideal manifold settings.

\begin{figure}[htbp]
	\centering
\includegraphics
[width=.99\linewidth]{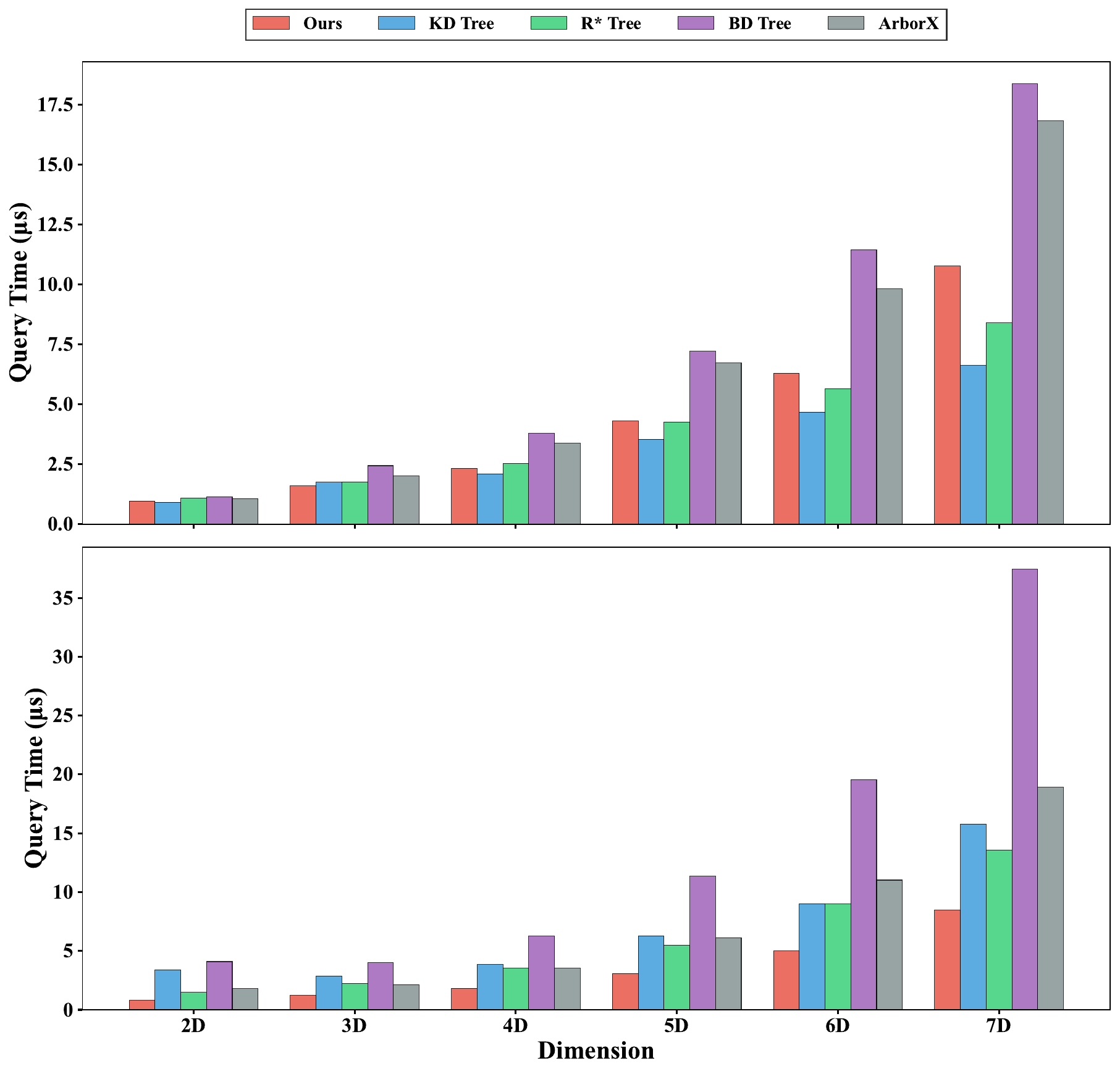}
\caption{Query time comparison across dimensions for uniform distribution (top) 
and manifold distribution on hypersphere (bottom).}
\label{fig:dimension}
\end{figure}

\paragraph{Performance Across Dimensions}
While our method theoretically supports arbitrary dimensions, its primary application domain is 2D/3D due to the exponential complexity of high-dimensional Delaunay construction. Nevertheless, understanding how query performance scales with dimensionality remains instructive. We evaluate two scenarios using $10$K target points and $1$M query points: (1) both uniformly distributed in $[-1,1]^d$, and (2) targets sampled on a unit hypersphere with queries located within a $2\times$ bounding box. As shown in Figure~\ref{fig:dimension}, query time increases with dimension for all methods. In the uniform setting, our approach demonstrates comparable performance at lower dimensions, though its query time increases relative to the best baselines as dimensionality grows. In the hypersphere setting, however, it maintains the best performance across all tested dimensions.

\begin{table}[htbp]
\caption{Average query time ($\mu s$) comparison under different insertion orders. Ours: CGAL's \texttt{spatial\_sort} function. Ours-FPO: farthest point ordering.}
\centering\small
\resizebox{0.99\columnwidth}{!}{
  \renewcommand{\arraystretch}{1.25}
  \setlength{\tabcolsep}{7pt}
  \rowcolors{2}{white}{rowgray}
  \begin{tabular}{c|ccccccc}
  \hline
  \rowcolor{headerblue} & Camel & Bunny & Dragon & Kitten & Armadillo & Lucy & Sponza \\
  \hline
  Vertices & 28934 & 72911 & 100313 & 291023 & 726367 & 1018219 & 1313504 \\\hline
  Ours     & 1.704 & 1.718 & 1.797  & 1.789  & 1.972  & 2.264   & 2.843   \\
  Ours-FPO    & 2.043 & 2.060 & 2.249  & 2.176  & 2.539  & 2.995   & 3.293   \\
  \hline
  \end{tabular}
}
\label{table:insertion_order}
\end{table}

\paragraph{Impact of Insertion Order.}
Point insertion order affects both construction and query efficiency. \citet{11165079} explored two sorting strategies---CGAL's \texttt{spatial\_sort} function and farthest point ordering (FPO)---finding that FPO yields slower preprocessing but slightly faster $1$-NN queries. We evaluate both strategies for $k$-NN performance. Table~\ref{table:insertion_order} compares average query times using CGAL's \texttt{spatial\_sort} (Ours) versus FPO (Ours-FPO). Contrary to the $1$-NN case, FPO results in slower $k$-NN queries. This stems from two factors: first, while FPO accelerates the initial $1$-NN lookup, this step constitutes only a small fraction of the total $k$-NN query time; second, the strategy promotes a uniform distribution of inserted points, causing successor lists to include spatially distant points that weaken the locality essential for efficient $k$-NN traversal.

\section{Limitations, Future Work and Conclusion}

In this paper, we present a comprehensive extension of the dynamic programming-based nearest neighbor framework, enabling exact $k$-NN queries on manifold point clouds. By uncovering the recursive dependencies encoded within the construction history, we derive a theoretically grounded algorithm that achieves $1\times$--$10\times$ speedups over state-of-the-art spatial indexing structures in volume-to-surface query scenarios. We further augment the framework with two additional capabilities: prefix queries that facilitate $k$-NN searches within any subset $\{p_1, \ldots, p_m\}$ with zero overhead, and point deletion via local Delaunay updates, thereby establishing a fully dynamic point set data structure.

Despite these advantages, our approach has certain limitations. First, the reliance on incremental Delaunay triangulation and Successor Table construction incurs higher preprocessing times, rendering the method best suited for query-intensive applications where this initial overhead can be effectively amortized. Second, when both query and target points reside on the same manifold surface---as in surface normal estimation---our method experiences a slight performance degradation compared to highly optimized, domain-specific alternatives. Finally, the current point deletion mechanism does not yet match the efficiency of specialized dynamic structures, making it less ideal for environments demanding high-frequency updates.

Addressing these limitations presents compelling avenues for future research. Algorithmically, we aim to accelerate preprocessing by parallelizing the insertion of spatially distant points, akin to strategies employed in CGAL. Additionally, exploring lazy update mechanisms and batch reordering could significantly enhance deletion efficiency and cache locality. Beyond software optimizations, GPU acceleration offers a promising yet challenging frontier. While individual $k$-NN queries are inherently parallelizable, the sequential dependencies of Delaunay construction, the need for contiguous memory layouts to ensure coalescing, and the fine-grained synchronization required for concurrent deletions pose significant hurdles. We are actively exploring these interconnected directions to fully realize our framework's potential.

\begin{acks}
The authors thank the anonymous reviewers for their insightful comments and suggestions. This work was supported by the National Natural Science Foundation of China (Grants U23A20312, 62272277, and 62472257) and the Natural Science Foundation of Shandong Province (Grant ZR2025MS986).
\end{acks}

\bibliographystyle{ACM-Reference-Format}
\bibliography{main}

\end{document}